\newcommand{\bs}{\boldsymbol}
\newcommand{\bt}{\mathbf}
\DeclareMathOperator*{\maximize}{maximize}
\tikzset{
    triangle/.style={
        draw,
        shape border rotate=0,
        regular polygon,
        regular polygon sides=3,
        fill=black,
        inner sep=0,
        minimum size=3mm
        %node distance=1cm,
        %minimum height=.05mm
    }
}
\newcommand*{\shifttext}[2]{%
  \settowidth{\@tempdima}{#2}%
  \makebox[\@tempdima]{\hspace*{#1}#2}%
}
\newtheorem{theorem}{Theorem}[]
\newtheorem{lemma}{Lemma}[]
\newtheorem{proposition}{Proposition}[]
\newtheorem{corollary}{Corollary}[]
\newtheorem{definition}{Definition}[]
\newtheorem{example}{Example}[]
\newcommand{\I}{\mathcal{I}}
\newcommand{\J}{\mathcal{J}}
\newcommand{\K}{\mathcal{K}}
\newcommand{\F}{\mathcal{F}}
\newcommand{\huij}{\bt{\bar{u}}_{ij}}
\newcommand{\hvgk}{\bt{\bar{v}}_{gk}} %added by gokul
\newcommand{\hul}{\bt{\bar{U}}_i} %addedbygokul
\newcommand{\hhgkl}{\bt{\bar{H}}_{(i,gk)}}
\newcommand{\hhgkla}{\bt{\bar{H}}_{(i,gk)}^{(1)}}
\newcommand{\hhgklb}{\bt{\bar{H}}_{(i,gk)}^{(2)}}
\newcommand{\hhgklc}{\bt{\bar{H}}_{(i,gk)}^{(3)}}
\newcommand{\hhgkld}{\bt{\bar{H}}_{(i,gk)}^{(4)}}
\newcommand{\fgkl}{\bt{F}_{i,gk}}
\begin{document}
%\title{Can Interference Alignment Impact Wireless Cellular Network Optimization?}
\title{Role of Interference Alignment in Wireless Cellular Network Optimization}

\author{Gokul~Sridharan,~\IEEEmembership{Member,~IEEE,} Siyu~Liu,~\IEEEmembership{Member,~IEEE,}
        and~Wei~Yu,~\IEEEmembership{Fellow,~IEEE,}% <-this % stops a space
\thanks{Gokul Sridharan and Siyu Liu were with the Edward S. Rogers Sr. Department of Electrical and Computer Engineering, University of Toronto, Toronto, Ontario M5S3G4, Canada. Gokul Sridharan is now at the Wireless Information Network Laboratory, Department of Electrical and Computer Engineering, Rutgers, The State University of New Jersey, North Brunswick, New Jersey, 08902, USA (e-mail: gokul@winlab.rutgers.edu).}
\thanks{Wei Yu is with the Edward S. Rogers Sr. Department of Electrical and Computer Engineering, University of Toronto, Toronto, Ontario M5S3G4, Canada (e-mails: \{siyu,weiyu\}@comm.utoronto.ca).}% <-this % stops a space
\thanks{This work was supported by the Natural Sciences and Engineering Research
Council (NSERC) of Canada and BLiNQ Networks Inc., Kanata, Canada. 
The material in this paper has been presented in part at the Asilomar Conference on Signals, Systems, and Computers, Asilomar, USA, November 2015.
}% <-this % stops a space
}

\maketitle

\begin{abstract}

The emergence of interference alignment (IA) as a degrees-of-freedom optimal strategy motivates the need to investigate whether IA can be leveraged to aid conventional network optimization algorithms that are only capable of finding locally optimal solutions. To test the usefulness of IA in this context, this paper proposes a two-stage optimization framework for the downlink of a $G$-cell multi-antenna network with $K$ users/cell. The first stage of the proposed framework focuses on nulling interference from a set of dominant interferers using IA, while the second stage optimizes transmit and receive beamformers to maximize a network-wide utility using the IA solution as the initial condition. Further, this paper establishes a set of new feasibility results for partial IA that can be used to guide the number of dominant interferers to be nulled in the first stage. Through simulations on specific topologies of a cluster of base-stations, it is observed that the impact of IA depends on the choice of the utility function and the presence of out-of-cluster interference. In the absence of out-of-cluster interference, the proposed framework outperforms straightforward optimization when maximizing the minimum rate, while providing marginal gains when maximizing sum-rate. However, the benefit of IA is greatly diminished in the presence of significant out-of-cluster interference.

\end{abstract}

\maketitle
\begin{keywords}
Interference management, multi-antenna systems, cellular networks, network utility maximization, beamforming, interference alignment.
\end{keywords}
\vspace{-1mm}
\section{Introduction}
\label{section_intro}

%Inter-cell interference is a major impediment in dense cellular networks that also employ aggresive frequency reuse. Coordination among BSs has been proposed as a way to mitigate inter-cell interference in such networks. 
Interference coordination through the joint optimization of the transmission variables has emerged as a promising technique to address inter-cell interference in dense cellular networks. Efforts to develop algorithms for such a joint optimization have largely been divided into two separate domains: that of network  utility maximization (NUM) over power, beamforming and frequency allocation and that of  interference alignment (IA) for maximizing the degrees of freedom (DoF) of multi-antenna cellular  networks. The relationship between the two, however, remains largely unexplored. This paper attempts to answer the important and practically relevant question of whether or not DoF-focused IA algorithms can make an impact on wireless cellular network optimization. In particular, this paper focuses on the impact of IA in maximizing a given network utility in a  $G$-cell cluster having $K$ users/cell, with $N$ antennas at each  base-station (BS) and $M$ antennas at each user---a $(G,K,M\times N)$ cluster, with and without out-of-cluster interference.
%zakhour,binda,minmaxCWT2

\subsection{Motivation and Existing Work}

%binda, newref1
Joint optimization in coordinated cellular networks is an area of active research  \cite{gesbertjsac, venturino, weiyutwc, zakhour, minmaxCWT2, wmmse, hayssam, yangkim, minmaxCWT, minmaxyichao, minmaxraza, binda, newref1, newref2,newref3,newref4,newref5,newref6,newref7,newref8}. Typically, such an optimization problem involves the  maximization of a network-wide utility function (weighted-sum-rate, max-min-fairness rate, etc) over transmission parameters such as beamformers and transmit powers. Several novel techniques that exploit  equivalence relations between various problem formulations (e.g., weighted-sum-rate maximization and weighted mean-squared-error minimization  \cite{wmmse, minmaxraza}) or use concepts like uplink-downlink  duality \cite{hayssam,minmaxyichao,newref7} have been proposed in the context of NUM. However, irrespective  of the problem formulation and the proposed  solution, the non-convex nature of these problems makes it challenging to find efficient methods capable of finding solutions that are closer to the global optimum.

%parklee,moewin,tresch
In parallel to these developments, significant progress has been made in establishing the DoF of multi-antenna cellular networks \cite{cadambejafar, maddah, chenweiwang,tingtingliu,razaviyayn,oscargonzalez,gokulArxiv,suh, suh2, huasunjafarisit2012,liu-yang-arxiv,moewin}. IA, with and  without symbol extensions, has played a key role in establishing these results \cite{cadambejafar, maddah, chenweiwang, zhuangfeasibility,yetis,razaviyayn,tingtingliu, gokulArxiv,oscargonzalez}. Since the capacity  of cellular networks is still unknown, DoF provides crucial insight on the limits of  cellular networks at high signal-to-noise ratios (SNRs). In particular, IA using beamforming techniques without symbol extensions has attracted significant attention due to its simple form and relative ease of implementation from a practical standpoint \cite{ zhuangfeasibility, yetis, chenweiwang, razaviyayn, tingtingliu,gokulArxiv,tresch, oscargonzalez, gomadam, peters2009}. While IA appears to be a reasonable goal to pursue, its impact on practical network performance cannot be taken for granted. Indeed, due  to the limited focus of IA on interference suppression while neglecting signal strength, IA cannot be viewed as a substitute for NUM and must instead be considered as a potential augmentation to the optimization process. Note that since both IA and NUM  algorithms place similar requirements on channel-state information (CSI) and thus have a similar overhead, it is pertinent to assess the value of IA in relation to NUM under realistic channel conditions that include pathloss, shadowing and fading.

%However, due to the asymptotic nature of  these results, the value of IA in the context of NUM  under realistic channel conditions that include pathloss, shadowing and fading has not yet been fully investigated, to the best of the authors' knowledge.

Note that the goal of this paper is different from the many existing algorithms that minimize mean-squared-error (MSE) as a proxy for some network utility function \cite{wmmse,gomadam,schmidtTSP,schmidt2009,minmaxraza, christensen, peters2009}. These algorithms do not explicitly compute aligned beamformers but have been empirically observed to converge to aligned beamformers at high SNRs (i.e., high transmit powers). Although this observation appears to suggest that such algorithms implicitly account for the value of aligned beamformers, they do not explicitly compute or utilize aligned beamformers at finite SNRs and thus do not shed light on the value of IA at finite SNRs.

This paper tries to fill this void by examining whether NUM algorithms can benefit from explicit IA, even at finite SNRs. In particular, this paper investigates whether the network-utility landscape as a function of the beamformer coefficients for a fixed SNR is such that local minima close to aligned beamformers achieve higher network utility (as a consequence of better interference mitigation) than those obtained via a random initialization of the beamformers. Note that while the network-utility landscape changes as a function of SNR, the aligned beamformers are invariant to SNR. Thus, our goal is to examine whether this fixed neighbourhood around aligned beamformers carries any particular significance in the context of NUM at finite SNRs.

Significance of aligned beamformers in the context of NUM has been considered before \cite{taowang,guleryener,binda}, albeit in a limited context. In \cite{taowang}, it was observed that MSE minimization algorithms perform better when initialized to aligned beamformers in the 3-user interference channel. More recently, \cite{guleryener} considers using aligned beamformers in a heterogeneous network to null interference from macro users to femto BSs. In particular, when the macro users judiciously null interference to a selected set of femto BSs using aligned beamformers, it is shown that the overall macro-femto throughput increases. But an important issue that is not addressed by either of these papers is the fact that setting aside subspaces at transmitters and receivers for aligning interference comes at the cost of not using those dimensions to schedule more users or to deliver more spatial streams to a user. Thus, a fair assessment of the value of IA at finite SNRs must also consider cases where spatial multiplexing, i.e., maximizing total number of data streams, is prioritized, with no regard for IA.
Studying this trade-off is further warranted by a recent result in \cite{mungara}, where using tools from stochastic geometry it is established that IA (without any NUM) only rarely outperforms spatial multiplexing in large cellular networks where there is unavoidable out-of-cluster interference. To obtain a comprehensive insight on the value of IA at finite SNRs, this paper also considers the trade-off between spatial multiplexing and IA by varying the number of users scheduled in a time-frequency slot.

\subsection{Proposed Framework}

With the two-pronged goal of studying the value of IA in NUM at finite SNRs and the trade-off between spatial multiplexing and IA,  this paper proposes a two-stage optimization framework that is specifically designed to shed light on these issues. The proposed framework assumes perfect CSI to be available at a centralized location. Although the assumption  of perfect CSI for IA has come under significant scrutiny \cite{elayach1, elayach2}, given that the value of IA  has not been fully established in the NUM context even with perfect CSI, this paper makes the perfect CSI assumption and focuses solely on the role of IA in NUM.

As a first step in developing the proposed framework, this paper establishes certain crucial results on feasibility of partial IA via beamforming without symbol extensions. Partial IA refers to the selective nulling of interference from certain interferers in a given $(G,K,M\times N)$ cluster. It is well known that when complete IA is required in a $(G,K,M\times N)$ cluster while ensuring 1 DoF/user, $M+N \geq (GK+1)$ is a necessary and sufficient condition for the feasibility of designing transmit and receive beamformers for IA without symbol extensions \cite{zhuangfeasibility,yetis,razaviyayn,tingtingliu}. However, such a condition is too restrictive in networks with realistic channels where complete IA may not be feasible or even necessary. In particular, we establish necessary and sufficient conditions for feasibility of partial IA in a $(G,K,M\times N)$ cluster when interference from a set of $q$ BSs (dominant interferers) is nulled at each user.

We then switch focus to the design of the two-stage optimization framework. The first stage of this framework exclusively focuses on mitigating interference from the dominant interferers using IA. This stage relies significantly on the feasibility results of partial IA to guide the design choices. The second stage uses this altered interference landscape to optimize the network parameters to maximize a given utility function. Such a framework counters the  myopic nature of straightforward NUM algorithms by leveraging IA's ability to comprehensively address interference from the dominant interferers while subsequently relying on numerical optimization algorithms to account for signal strength and to maximize the network utility. Such a framework is uniquely suited to assess the impact of IA on NUM as it leverages IA's strengths on nulling interference to aid the performance of algorithms for NUM without altering their functioning in any significant manner. Note that recent work on multilevel topological interference management also advocates a similar approach to  manage interference in wireless networks \cite{multileveltopo}. Such an approach, proposed to achieve a certain number of generalized degrees of freedom, requires decomposing the network into two  components, one consisting of links that  correspond to interference that needs to be avoided or nulled, and the other consisting of links  where interference is sufficiently weak and is handled through power control. Our effort can be thought of in similar terms but in a more practical setting with the overall objective of maximizing  a utility function.

We use the two-stage framework to maximize either the minimum rate to the scheduled users (max-min fairness) or the sum-rate subject to per-BS power constraints. Using the results on the feasibility of partial IA in a  $(G,K,M\times N)$ cluster, in the first stage, we identify a requisite number of dominant interfering BSs to be nulled for each user. After aligning interference from the dominant BSs, we alternately optimize the transmit and receive beamformers to maximize the utility function.

Unlike typical studies on IA, we hold maximum transmit power fixed and instead use density (by decreasing inter-BS spacing) as a proxy for simulating networks that are severely interference-limited. With increasing density, the received signal strength and the observed interference grow at the same rate. Such a regime closely reflects practical network deployments and is different from other IA studies where the strength of uncoordinated interference is held fixed, while transmit power is increased to infinity\cite{peters2009}. The intention of our approach is to scan a wide range of scenarios from the severely-interference-limited regime to the noise-limited regime. IA is known to be DoF-optimal, but its value at moderate to weak SINRs is not very clear. Increasing network density allows us to explore the question on how much interference the network can tolerate before alignment becomes indispensable.

\subsection{Key Observations}
When maximizing the network utility of \emph{minimum rate} across the scheduled users in specific topologies of an \emph{isolated cluster} of BSs under realistic channel conditions, simulation results indicate that IA is helpful in the sense that (a) aligned beamformers do not naturally emerge from straightforward NUM algorithms even at high signal-to-noise  ratios; (b) aligned beamformers provide a significant advantage as initial condition to NUM, especially when BSs are  closely spaced; and (c) IA provides insights on the optimal number of users to be scheduled per cell. In particular, fewer number of users should be scheduled per cell as the BS-to-BS distance  decreases until the number of users per cell reaches $ K=\lfloor \frac{M+N-1}{G} \rfloor$.

These observations however do not apply when maximizing the network utility of sum-rate across the users. In particular it is seen that the impact of IA on sum-rate maximization for an isolated cluster of BSs is marginal. This draws attention to the choice of utility function in evaluating the role of IA. The key difference between the two utility functions is that when maximizing sum-rate, the number of users eventually served can change during the optimization process (some users may not be allocated any transmit power), and may not be the same as the number chosen by the original scheduler. Since aligned beamformers are designed based on the users originally scheduled by the scheduler, any subsequent change in the users served undermines the impact of IA. %This however cannot happen when maximizing the minimum rate. 

It is further seen that IA has no impact on maximizing either utility in the presence of out-of-cluster interference. In the presence of uncoordinated interference, cancelling interference from just one or two dominant BSs does not sufficiently affect the optimization landscape to yield better solutions. Although these observations are based on a particular optimization framework proposed in this paper, these results suggest that IA cannot be used indiscriminately to address all interference-related issues. Instead, it is advisable to adopt a more targeted approach to using IA, with its use being restricted to certain special cases.

\subsection{Paper Organization}

This paper is organized as follows. Section \ref{section_signal_model} establishes the system model used in this paper. Section \ref{section_PIA} discusses a set of feasibility results on partial IA, while Section \ref{section_optframework} describes the proposed optimization framework. The performance of the proposed optimization framework and the role of IA in NUM are discussed in Section \ref{section_simulations}.

\subsection{Notation}
In this paper, column vectors are represented in bold lower-case letters and matrices in bold upper-case letters. The conjugate transpose and Euclidean norm of a vector $\bt v$ are denoted as $\bt v^H$ and $\|\bt v \|$, respectively. The identity matrix is denoted as $\bt I$. Calligraphic letters (e.g., $\mathcal{Q}$) are used to denote sets and $|\cdot |$ is used to refer to the number of elements in a set. The notation $\mathcal{CN}(\bs \mu, \sigma^2 \bt I)$ is used to denote the multi-variate circularly-symmetric complex Gaussian density function with mean $\bs \mu$ and  variance $\sigma^2$ in each dimension.

\section{System Model}
\label{section_signal_model}
\begin{figure}
\centering
   \includegraphics[width=3.4in]{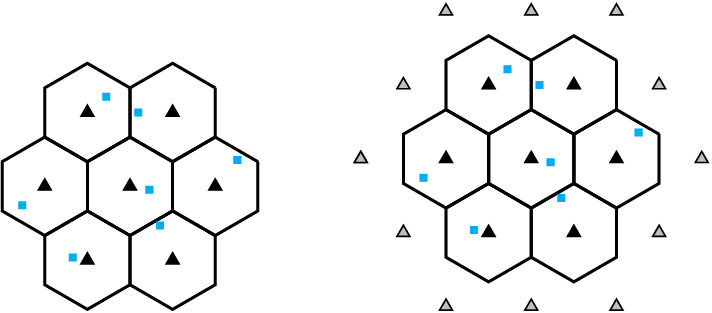}
 \caption{An isolated 7-cell cluster and a 7-cell cluster experiencing out-of-cluster interference. The squares represent users in each cell.}
 \label{fig_signal_model}
\end{figure}
Consider the downlink of a cellular network consisting of a cluster of $G$ interfering cells with $K$ users per cell. These $G$ interfering cells could either be isolated or be in the presence of several other interfering cells, resulting in out-of-cluster interference, as shown in Fig. \ref{fig_signal_model}. Each user is assumed to have $M$ antennas and each BS is assumed to have $N$ antennas. Let  the channel from the $i$th BS to the $k$th user in the $g$th cell be denoted  as the $M\times N$ matrix $\bt H_{(i,gk)}$. The channel model includes pathloss, shadowing and fading. We assume that the channel coefficients are known perfectly at a central location. Additionally, it is assumed that all fading coefficients are generic (or equivalently,  drawn from a continuous distribution). Assuming that each user is served with one data stream, the transmitted signal  corresponding to the $k$th user in the $g$th cell is given by $\bt{v}_{gk} s_{gk}$, where  $\bt v_{gk}$ is a $N \times 1$ linear transmit beamforming vector and $s_{gk}$ is the symbol to be  transmitted. This signal is received at the intended user using a $M \times 1$ receive  beamforming vector $\bt u_{gk}$. The received signal after  being processed by the receive beamforming vector can be written as 
\begin{align}
\bt{u}_{gk}^H\bt{y}_{gk}&=\sum_{i=1}^G\sum_{j=1}^K \bt{u}_{gk}^H \bt H_{(i,gk)}\bt{v}_{ij}{s}_{ij}+\bt{u}_{gk}^H\bt z_{gk}+\bt{u}_{gk}^H\bt n_{gk},
\label{eq_uhv}
\end{align}
where $\bt n_{gk}$ is the $M\times 1$ vector representing additive white Gaussian noise and $\bt z_{gk}$ represents cumulative out-of-cluster interference received at the $(g,k)$th user. A 7-cell cluster with and without out-of-cluster interference is represented in Fig.~\ref{fig_signal_model}. Since distance-dependent pathloss is included in the channel model, the cumulative interference at a user depends on the overall network topology and is a function of distance to the out-of-cluster interferers.  This paper restricts attention to beamforming based IA without symbol extensions in time or frequency. With the goal of evaluating the role of IA for NUM, we first investigate feasibility conditions for IA in cellular networks.

\section{Feasibility of Partial Interference Alignment} 
\label{section_PIA}
This section presents a set of conditions for the feasibility of IA when interference from only a subset of BSs within the cluster is cancelled at a user. Since interference from only a subset of  the interferers is aligned, we call this partial IA. It is important to  establish these results as complete IA may not be feasible in a given cluster  and sometimes, even unnecessary. We later use this result to guide how many and which interferers to align for NUM.

In the $G$-cell cluster described above, we construct a list $\mathcal{I}$ of  BS-user pairs where each pair indicates the need to cancel interference from a specific BS to a specific user. Let the double index $gk$ denote the $k$th user in the $g$th cell  and the single index $l$ denote the $l$th BS. For example, if the pair $(3,12) \in \mathcal{I}$,  this implies that the interference from the third BS  is to be completely nulled at the second user in the first cell. Satisfying this condition requires solving the following $K$ equations: 
\begin{align}
\bt u^H_{12} \bt H_{(3,12)} \bt v_{3j}= 0, \qquad \forall j \in \{1,2,\hdots, K \}.
\label{PIAeq}
\end{align}
In addition to these conditions, we also require the set of transmit beamformers at any BS to be linearly independent, i.e, rank$([\bt v_{g1},\bt v_{g2},\hdots, \bt v_{gK}])=K$.

Cancelling interference from only a subset of the interfering BSs is analogous to complete IA in partially connected cellular networks where certain cross links are assumed to be completely absent \cite{guillaudgesbert}. When the set $\mathcal{I}$ consists of all the $(G-1)GK$ possible pairs (denoted as $\mathcal{I}_{all}$), we get the usual conditions for complete IA \cite{zhuangfeasibility, yetis}. Each of the $K$ equations in (\ref{PIAeq}) is quadratic and collectively they form a polynomial system of equations. Feasibility of the system of polynomial equations when $\mathcal{I}=\mathcal{I}_{all}$ is well studied using tools from algebraic geometry \cite{razaviyayn,tingtingliu,bresler,oscargonzalez}. These tools have been applied to establish the feasibility of IA for the interference channel \cite{razaviyayn,oscargonzalez} and for the cellular network \cite{tingtingliu} under complete interference cancellation. The same set of tools can also be used to establish conditions for feasibility of partial IA for any given $\mathcal{I}$. The following theorem establishes one such result.
%moewin, oscargonzalez, bresler, yetis
\begin{theorem}
\label{feasibilitytheorem}
Consider a $(G,K,M\times N)$ cluster where each user is served with one data stream. Let $\bt v_{gk}$ and $\bt u_{gk}$ denote the transmit and receive beamformer corresponding to the $(g,k)$th user where the set of beamformers $\{\bt v_{g1},\bt v_{g2},\hdots,\bt v_{gK} \}$ is linearly independent for every $g$. Further, let $\mathcal{I} \subseteq \{(i,gk):\, g\neq i,\, 1 \leq g,\, i\leq G,\, 1 \leq k \leq K \}$ be a set of user-BS pairs such that for each $(i,gk) \in \mathcal{I}$ the interference caused by the $i$th BS at the $(g,k)$th user is completely nulled, i.e.,
\begin{align}
\bt u^H_{gk} \bt H_{(i,gk)} \bt v_{ij}= 0, \qquad \forall j \in \{1,2,\hdots, K \}.
\end{align}
A set of transmit and receive beamformers $\{ \bt v_{gk} \}$ and $\{ \bt u_{gk}\}$ satisfying the polynomial system defined by $\mathcal{I}$ exists if and only if
\begin{gather}
 M\geq 1,\qquad  N \geq K,
 \label{signalineq}
\end{gather}
and
\begin{gather}
|\mathcal{J}_{users}|(M-1)+|\mathcal{J}_{BS}|(N-K )K  \geq |\mathcal{J}| K 
\label{complexcondition}
\end{gather}
for all $\mathcal{J} \subseteq \mathcal{I}$ where $\mathcal{J}_{users}$ and $\mathcal{J}_{BS}$ are the set of user and BS indices that appear in $\mathcal{J}$.
\end{theorem}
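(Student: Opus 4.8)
The plan is to cast the existence of beamformers satisfying the polynomial system as the statement that a certain projective algebraic variety is non-empty, and then apply the dimension-counting criterion that underlies the known complete-IA feasibility results \cite{razaviyayn,tingtingliu,oscargonzalez}. First I would set up the ambient space: each transmit beamformer $\bt v_{gk}$ lives in $\mathbb{P}^{N-1}(\mathbb{C})$ and each receive beamformer $\bt u_{gk}$ in $\mathbb{P}^{M-1}(\mathbb{C})$, so the full configuration space has complex dimension $GK(M-1)+GK(N-1)$, with the additional open (generic) condition $\mathrm{rank}([\bt v_{g1},\ldots,\bt v_{gK}])=K$ cutting out a Zariski-open dense subset when $N\geq K$. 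Each pair $(i,gk)\in\mathcal{I}$ contributes $K$ bilinear equations $\bt u^H_{gk}\bt H_{(i,gk)}\bt v_{ij}=0$, for a total of $|\mathcal{I}|K$ equations. The necessity of \eqref{signalineq} is immediate: $N\geq K$ is forced by the linear-independence requirement on the $K$ transmit beamformers per cell, and $M\geq 1$ is trivial. The heart of the matter is showing that \eqref{complexcondition}, ranging over all $\mathcal{J}\subseteq\mathcal{I}$, is exactly the right bookkeeping for when the variety is non-empty for generic channels.

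For sufficiency I would use a constructive/greedy assignment argument in the spirit of the proper-system analysis: show that if the inequality \eqref{complexcondition} holds for every subset, one can sequentially allocate the ``free'' dimensions at users (each user contributes $M-1$ usable dimensions) and at BSs (each BS contributes $(N-K)K$ usable dimensions, since $K$ dimensions per BS are consumed by the linear-independence/rank constraint) to absorb the $K$ constraints imposed by each pair in $\mathcal{I}$, invoking a Hall-type matching/defect argument so that the subset condition is precisely the feasibility certificate. Equivalently, one shows the Jacobian of the polynomial map has full rank at a generic point of a suitably constructed solution, so that the system is not over-determined on any sub-system indexed by $\mathcal{J}$; this is where the bilinear structure is exploited, because the partial derivatives with respect to $\bt u_{gk}$ and $\bt v_{ij}$ have a block structure indexed exactly by $\mathcal{J}_{users}$ and $\mathcal{J}_{BS}$. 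For necessity I would run the contrapositive: if \eqref{complexcondition} fails for some $\mathcal{J}\subseteq\mathcal{I}$, then the sub-system of equations indexed by $\mathcal{J}$ already involves more independent polynomial constraints than the total number of variables appearing in it (counting $(M-1)$ per participating user and $(N-K)K$ per participating BS), so by a standard algebraic-geometry dimension argument (e.g., the theorem on dimension of fibers, or Bézout-type counting as used in \cite{razaviyayn,oscargonzalez}) the variety is empty for generic $\bt H_{(i,gk)}$.

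I expect the main obstacle to be the bookkeeping in the sufficiency direction---specifically, verifying that the subset inequality is not merely necessary but genuinely sufficient, i.e., that no ``hidden'' over-determinacy arises from the interaction between the rank constraints on the transmit beamformers and the alignment equations. The subtlety is that the $(N-K)K$ count per BS already presupposes that the $K$ transmit beamformers at that BS have been placed in general position, and one must check that the remaining degrees of freedom can still be steered to satisfy all incident alignment constraints simultaneously; this requires either an explicit inductive construction (peeling off one pair at a time while maintaining a generic partial solution) or a careful application of the theory of linear spaces on determinantal-type varieties. A secondary technical point is justifying the passage from ``dimension count is non-negative on every subsystem'' to ``the full system has a solution'' for \emph{generic} channels rather than merely \emph{some} channels, which is handled by noting that the locus of channels admitting a solution is Zariski-closed (image of a projection of the incidence variety) and exhibiting one channel realization for which the greedy construction succeeds, so that genericity follows.
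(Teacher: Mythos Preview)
Your proposal is essentially the paper's approach: parameterize the beamformers so that each user contributes $M-1$ free scalars and each BS contributes $(N-K)K$ free scalars (the paper does this via the local chart $\bt U_i = \bt P_i[\,\bt I;\,\bar{\bt U}_i\,]\bt R_i$), then establish necessity by showing any violating subsystem $\mathcal J$ is over-determined and sufficiency by exhibiting a channel for which the Jacobian of the bilinear map is nonsingular, the latter via a Hall-matching argument on the block structure indexed by $\mathcal J_{users}$ and $\mathcal J_{BS}$. One small correction: the locus of channels admitting a solution is the image of a polynomial projection and is therefore \emph{constructible} (Chevalley), not Zariski-closed; genericity follows not from ``closed plus one point'' but from the Jacobian being generically full rank, which makes the map dominant and forces the image to contain a nonempty open set---this is exactly what the paper invokes, and your Jacobian remark already captures it, so just drop the Zariski-closed claim.
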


The proof of this theorem broadly follows the technique used in \cite{razaviyayn, tingtingliu} and is presented in Appendix \ref{algebraicgeometryproof}. Note that the inequalities in (\ref{signalineq}) are needed to ensure that the users and BSs have the minimum necessary antennas to decode the desired signal streams. Although intra-cell interference cancellation is not explicitly mentioned in the theorem, it can be subsequently eliminated through a simple linear transformation of the linearly independent transmit beamformers in each cell. A useful corollary that emerges from this theorem is stated below.
\begin{corollary}
\label{cor1}
Suppose the set $\mathcal{I}$ is such that each user in a $(G,K,M\times N)$ cluster requires interference from no more than $q$ BSs to be cancelled, where $1\leq q \leq G-1$, and each BS has no more than $Kq$ users that require this BS's transmission to be nulled at these users, then a set of sufficient conditions for the feasibility of IA is given by
\begin{gather}
 M\geq  1,\qquad  N \geq   K,
\end{gather}
and
\begin{gather}
M+N \geq  K(q+1)+1.
\label{simplecondition}
\end{gather}
\end{corollary}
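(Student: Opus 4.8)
The plan is to obtain Corollary~\ref{cor1} as a direct consequence of Theorem~\ref{feasibilitytheorem}: I will show that the three stated hypotheses $M\geq 1$, $N\geq K$ and $M+N\geq K(q+1)+1$ force the two feasibility conditions (\ref{signalineq}) and (\ref{complexcondition}) of Theorem~\ref{feasibilitytheorem} to hold for the set $\mathcal{I}$ in question. Condition (\ref{signalineq}) is literally the first pair of hypotheses, so the entire task reduces to verifying the subset inequality (\ref{complexcondition}) for every $\mathcal{J}\subseteq\mathcal{I}$.

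The first step is to translate the degree bounds in the hypothesis into two counting inequalities. Fix $\mathcal{J}\subseteq\mathcal{I}$ and abbreviate $a=|\mathcal{J}_{users}|$, $b=|\mathcal{J}_{BS}|$, $c=|\mathcal{J}|$. Because each user needs interference from at most $q$ BSs cancelled, each user index appearing in $\mathcal{J}$ is paired with at most $q$ BSs inside $\mathcal{J}\subseteq\mathcal{I}$, so $c\leq qa$; because each BS appears in at most $Kq$ pairs of $\mathcal{I}$, likewise $c\leq Kqb$. Both bounds are inherited by every subset of $\mathcal{I}$, so they hold uniformly in $\mathcal{J}$.

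The crux is a single convex combination that works for all $\mathcal{J}$ at once. I will pick $\alpha=\min\{1,(M-1)/(qK)\}\in[0,1]$ and verify, using $M\geq1$, $N\geq K$ and $M+N\geq K(q+1)+1$, that $M-1\geq\alpha qK$ and $N-K\geq(1-\alpha)Kq$; adding these two inequalities gives back precisely $M+N\geq K(q+1)+1$, which is exactly why such an $\alpha$ is available under the corollary's assumption (the clamp to $1$ in the definition of $\alpha$, which is active when $M-1>qK$, is harmless since then $N-K\geq 0$ suffices). Multiplying the first inequality by $a$ and using $qa\geq c$, and the second by $bK$ and using $Kqb\geq c$, I obtain $a(M-1)\geq\alpha cK$ and $bK(N-K)\geq(1-\alpha)cK$; summing these yields $a(M-1)+bK(N-K)\geq cK$, which is exactly (\ref{complexcondition}). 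Theorem~\ref{feasibilitytheorem} then produces the required transmit and receive beamformers.

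I do not foresee a real obstacle here, since this is a corollary: the only place needing genuine care is confirming that the two counting bounds survive restriction to arbitrary subsets, so that (\ref{complexcondition}) is checked for \emph{every} $\mathcal{J}$ rather than merely for $\mathcal{J}=\mathcal{I}$, together with the short boundary-case check in the choice of $\alpha$. Everything else is bookkeeping.
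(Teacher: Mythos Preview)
Your proposal is correct. Both you and the paper reduce the claim to verifying inequality (\ref{complexcondition}) for every subset $\mathcal{J}\subseteq\mathcal{I}$, and both begin by extracting the two counting bounds $|\mathcal{J}|\leq q\,|\mathcal{J}_{users}|$ and $|\mathcal{J}|\leq Kq\,|\mathcal{J}_{BS}|$ from the degree hypotheses on $\mathcal{I}$.

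The difference lies in how (\ref{complexcondition}) is then established. The paper splits into cases depending on which of $|\mathcal{J}_{BS}|K$ and $|\mathcal{J}_{users}|$ is larger; in the case $|\mathcal{J}_{BS}|K\geq|\mathcal{J}_{users}|$ it replaces the second summand $|\mathcal{J}_{BS}|(N-K)K$ by the smaller quantity $|\mathcal{J}_{users}|(N-K)$, collapses the left side to $|\mathcal{J}_{users}|(M+N-1-K)$, and finishes via (\ref{simplecondition}) and the user-side counting bound. Your argument instead fixes a single weight $\alpha=\min\{1,(M-1)/(qK)\}$ depending only on the antenna parameters, splits the target $|\mathcal{J}|K$ as $\alpha|\mathcal{J}|K+(1-\alpha)|\mathcal{J}|K$, and bounds the two pieces separately using the two counting inequalities. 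Your route avoids the case distinction on $\mathcal{J}$ and handles every subset with one uniform choice of $\alpha$; the paper's route avoids introducing an auxiliary parameter and is marginally shorter. Both are clean and equally valid.
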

\begin{proof}
First note that for any choice of $\mathcal{J}$ in (\ref{complexcondition}), the assumptions in the corollary imply $|\mathcal{J}|  \leq  \min \left(|\mathcal{J}_{BS}| Kq , |\mathcal{J}_{users}|q \right )$. Assume that $|\mathcal{J}_{BS}| Kq \geq |\mathcal{J}_{users}|q $; the proof for the other case can be established in a similar manner. Now, the following inequalities show that if (\ref{simplecondition}) is true, then (\ref{complexcondition}) holds for any choice of $\mathcal{J}$:
\begin{align}
&|\mathcal{J}_{users}|(M-1)+|\mathcal{J}_{BS}|(N-K )K \nonumber \\
& \geq |\mathcal{J}_{users}|(M-1+N-K) \nonumber \\
& \geq |\mathcal{J}_{users}|(K(q+1)-K) \quad (\text{using (\ref{simplecondition})})\nonumber\\
& = |\mathcal{J}_{users}|Kq \nonumber \\
& \geq |\mathcal{J}|K. \nonumber 
\end{align}
This completes the proof.
\end{proof}

Note that when $q=G-1$, we recover the well-known proper-improper condition for MIMO cellular  networks \cite{zhuangfeasibility}, i.e.,
\begin{align}
M+N \gtrless GK+1.
\end{align}
Fig.~\ref{feasibilityTable} illustrates the conditions of Corollary \ref{cor1} imposed on a $(4,2,3\times 4)$ cluster for the feasibility of partial IA. In this case,
\begin{align}
q \leq \left \lfloor \frac{M+N-1}{K} \right \rfloor-1 =2.
\end{align}
Each entry in  Fig.~\ref{feasibilityTable} represents a BS-user pair as identified by its row and column  indices. If a certain BS-user pair is in $\mathcal{I}$, the corresponding entry is marked with a  `$\times$'. Corollary \ref{cor1}  requires  $\mathcal{I}$ to be such that each row has no more than $q$ chosen entries and each column has no more than $Kq$ chosen entries. In particular, Fig.~\ref{feasibilityTable} considers feasibility of partial IA in a $(4, 2, 3 \times 4)$ cluster where each user can request interference from no more than $q = 2$ BSs to be cancelled and each BS can null interference at no more than $Kq = 4$ out-of-cell users. It is easy to see that the set of BS-user pairs chosen for interference cancellation satisfy the conditions imposed by Corollary \ref{cor1}.

This corollary provides a simpler set of guidelines on  choosing the set of BS-user pairs $(\mathcal{I})$ for partial IA than Theorem  \ref{feasibilitytheorem} where the number of feasibility constraints grows exponentially with the size of $\mathcal{I}$. However, designing $\mathcal{I}$ according to  this corollary rather than Theorem \ref{feasibilitytheorem} comes at the cost of simplifying restrictions on $\mathcal{I}$ that may otherwise be unnecessary. Note also that a key assumption of the feasibility condition derived in this paper is that each user is served one data-stream. Generalization of this condition to the multi-data-stream-per-user case is difficult and is in fact still an open problem even for the fully-connected case\footnote{Although a numerical test to verify feasibility in the multi-stream case is provided in \cite{oscargonzalez}, a closed form characterization of feasibility is not yet available.} \cite{razaviyayn,oscargonzalez,tingtingliu}.

Restricting to the single data-stream case, a crucial observation from Corollary \ref{cor1} is that there exists a trade-off between $K$, the number of users served in each cell, and $q$, the number of interferers each user can cancel interference from. The direct and interfering channel strengths in practical cellular networks can vary significantly. Intuitively, a cellular network should require interference nulling from only the dominant interferers while serving as many users per cell as possible. This necessitates a careful design of the set $\mathcal{I}$ while ensuring feasibility of  partial IA. The condition in the corollary plays an important role in network optimization framework developed in the next section. 

\begin{figure}
\small
\centering
\begin{tabular}{cc|c|c|c|l}
\cline{2-5}
\multicolumn{1}{ c|  }{ } & BS1 & BS2 & BS3 & BS4 & \\
\hhline{|-|-|-|-|-|~|}
\multicolumn{1}{ |c|  }{U11 } & \cellcolor[gray]{0.7}&  & $\times$ & $\times$ & $\leq 
q$\\
\hhline{|-|-|-|-|-|~}
\multicolumn{1}{ |c|  }{U12 } & \cellcolor[gray]{0.7} & $\times$ & $\times$ & & $\leq q$ \\
\hhline{|-|-|-|-|-|~}
\multicolumn{1}{ |c|  }{U21 } & $\times$ & \cellcolor[gray]{0.7} & $\times$ & & $\leq q$ \\
\hhline{|-|-|-|-|-|~}
\multicolumn{1}{ |c|  }{U22 } & $\times$ & \cellcolor[gray]{0.7} &  & $\times$ & $\leq q$ \\
\hhline{|-|-|-|-|-|~}
\multicolumn{1}{ |c|  }{U31 } &  & $\times$ & \cellcolor[gray]{0.7} & $\times$ & $\leq q$ \\
\hhline{|-|-|-|-|-|~}
\multicolumn{1}{ |c|  }{U32 } & $\times$ &  & \cellcolor[gray]{0.7} & $\times$ &$\leq q$ \\
\hhline{|-|-|-|-|-|~}
\multicolumn{1}{ |c|  }{U41 } & $\times$ & $\times$ &  & {\cellcolor[gray]{0.7}} & $\leq q$ \\
\hhline{|-|-|-|-|-|~}
\multicolumn{1}{ |c|  }{U42 } & & $\times$ & $\times$ &  {\cellcolor[gray]{0.7}} & $\leq q$ 	\\
\hhline{|-|-|-|-|-|~}
& \multicolumn{1}{@{}c@{}}{$\underbrace{\hspace*{\dimexpr3\tabcolsep+1\arrayrulewidth}\hphantom{\times} }_{\leq Kq}$} & 
\multicolumn{1}{@{}c@{}}{$\underbrace{\hspace*{\dimexpr3\tabcolsep+1\arrayrulewidth}\hphantom{\times} }_{\leq Kq}$} &
\multicolumn{1}{@{}c@{}}{$\underbrace{\hspace*{\dimexpr3\tabcolsep+1\arrayrulewidth}\hphantom{\times} }_{\leq Kq}$} &
\multicolumn{1}{@{}c@{}}{$\underbrace{\hspace*{\dimexpr3\tabcolsep+1\arrayrulewidth}\hphantom{\times} }_{\leq Kq}$} &
\end{tabular}
\caption{Illustration of the sufficient condition for feasibility of partial IA in a $(4,2,3\times 4)$ cluster where each user can request interference from no more than $q=2$ BSs to be cancelled and each BS can null interference at no more than $Kq=4$ out-of-cell users. }
\label{feasibilityTable}
\end{figure}
\normalsize
\section{Optimization Framework}
\label{section_optframework}

This section focuses on developing an optimization framework capable of leveraging the strength of IA in nulling interference to overcome the limitations imposed by non-convexity of the NUM problem.
%the traditional utility maximization problem for a multicell interference environment

In a wireless cellular network, spatial resources can be used in one of three ways: (a) they can be used to serve more users via spatial multiplexing; (b) they can be used to enhance the signal strength (by matched filtering); or (c) they can be used to null  interference (via zero-forcing/IA). NUM algorithms strive to strike the right balance between these  three competing objectives to maximize a certain utility. However, in dense cellular networks, due to the  conflicting nature of these objectives, NUM algorithms may not be able to comprehensively navigate  the entire optimization landscape. The primary motivation behind the proposed approach is to leverage the strength of IA in nulling interference through a pre-optimization step and subsequently using the NUM algorithm to re-balance these priorities to maximize the utility function.

Given a $(G,K,M\times N)$ cluster, we propose a two-stage optimization framework where the first stage focuses on nulling interference from the dominant interferers using IA, followed by a second stage of jointly optimizing the beamformers and the transmit powers to maximize a network utility using the IA  solution as the initial condition. Such a framework is well suited for investigating the benefits of IA in the context of  NUM. The difference in performance with and without the first stage of interference cancellation  sheds light on the value of IA in enhancing the performance of NUM algorithms. For a given network  topology, a significant difference in performance reflects that: (a) IA solutions are valuable from  a NUM perspective; and (b) IA solutions (or close-to-IA solutions) do not organically emerge from  NUM algorithms due to the conflicting uses for spatial resources. 

%\footnote{For this particular utility, spatial multiplexing is  handled by a external scheduler thus simplifying the NUM algorithm's task to simply balance signal  strength and interference across the network. For utilities such as weighted sum-rate, power control  acts as a proxy for controlling the number of users served.}.

\begin{figure}[t]
\begin{center}\small
%\hspace{-0.5cm}
\psfrag{Schedule K users/cell}[cc][cc][0.9]{\raisebox{-3.3mm}{\shifttext{-4mm}{Schedule $K$ users/cell}}}
\psfrag{iterate}[cc][cc][0.9]{iterate}
\psfrag{Identify q dominant}[cc][cc][0.9]{\shifttext{-4mm}{Identify $q$ dominant}}
\psfrag{interfering BS for each user;}[cc][cc][0.9]{\shifttext{-3.5mm}{interferers for each user;}}
\psfrag{ensure feasibility of IA}[cc][cc][0.9]{\shifttext{-0.5mm}{\raisebox{-3mm}{Ensure feasibility of IA}}}
\psfrag{Compute aligned beamformers}[cc][cc][0.9]{\raisebox{-3mm}{\shifttext{-9.8mm}{Compute aligned beamformers}}}
\psfrag{Fix rx. beamformers;}[cc][cc][0.9]{Fix receive beamformers.}
\psfrag{Initialize to aligned beamformers.}[cc][cc][0.87]{\raisebox{-0mm}{\shifttext{6.5mm}{Initialize to aligned beamformers.}}}
\psfrag{Optimize transmit and receive}[cc][cc][0.87]{\shifttext{3.5mm}{Optimize transmit and receive}}
\psfrag{beamformers to maximize}[cc][cc][0.87]{\raisebox{-3mm}{\shifttext{2mm}{beamformers to maximize}}}
\psfrag{the utility function.}[cc][cc][0.87]{\raisebox{-4mm}{\shifttext{6mm}{the utility function.}}}
\psfrag{Optimized power, beamformers}[cc][cc][0.88]{\shifttext{-2.6mm}{Optimized power, beamformers}}
\includegraphics[width=1.8in]{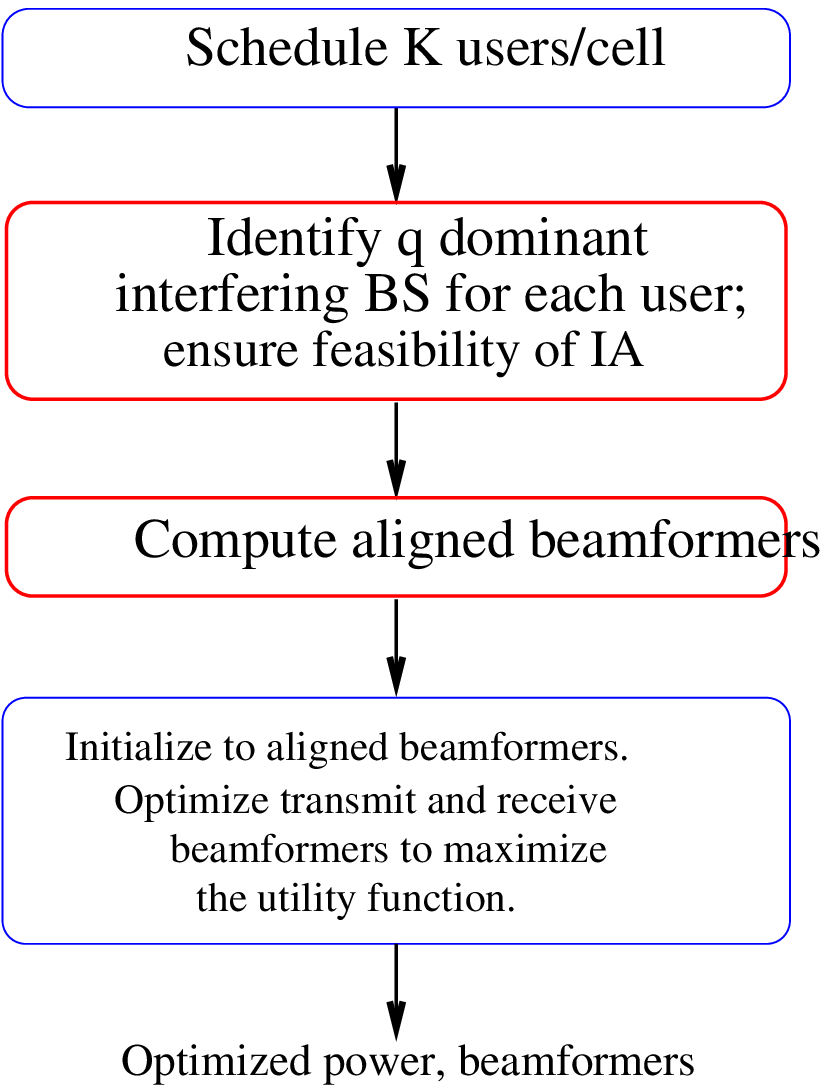}
	\caption{The proposed optimization framework.}
	\label{optflowchart}
      \end{center}
    \end{figure}

\begin{algorithm}[t]\small
\caption{\small Procedure to select dominant interferers such that partial IA is feasible in a $(G,K,M\times N)$ network.}
\label{alg:PIA}
\begin{algorithmic}[1] 
\STATE Fix $q=\lfloor \frac{M+N-1}{K} \rfloor$-1.\\
\STATE For each user identify $q$ dominant interfering BSs based on interference power.\\
\STATE Construct matrix $\bt M$ analogous to that shown in Fig.~\ref{feasibilityTable}. \\
\STATE Identify set of column indices $\mathcal{C}$ in $\bt M$ with greater than $Kq$ chosen entries.\\
\FORALL{ $c \in\mathcal{C}$}
\STATE  Create sorted list $\mathcal{L}_c$  of BS-user pairs from column $c$, sorted  in decreasing order of interference power.\\
\STATE Prune $\mathcal{L}_c$, from the bottom, until no more than $Kq$ pairs remain, to obtain $\hat{\mathcal{L}}_c$.\\
\STATE Use $\hat{\mathcal{L}}_c$ to form column $c$ of $\hat{\bt M}$.
\ENDFOR
\STATE BS-user pairs in $\hat{\bt M}$ ensure feasibility of partial IA.
\end{algorithmic}
\end{algorithm}

Specifically, we evaluate the effectiveness of IA by optimizing a network utility objective of either the sum-rate or the max-min-rate achieved in the network. These two objective functions are chosen specifically to highlight the importance of predetermining the number of users $(K)$ that get served in any given time-frequency slot. While sum-rate maximization allows a user to be assigned no power, thus altering the effective number of users served, such flexibility is not available when maximizing the minimum rate to the set of scheduled users. Since the design of beamformers for IA relies crucially on the number of scheduled users, it is expected that IA has more impact on maximizing the minimum rate than maximizing the sum-rate. This issue is discussed further in Section \ref{section_simulations}. Details of the proposed optimization framework follow.

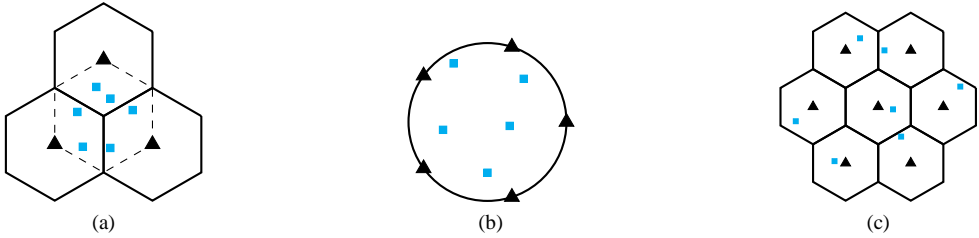
\begin{figure*}
\centering
\subfloat[]{
 \begin{tikzpicture}[scale=0.75, every node/.style={scale=0.75}]
 \coordinate (c1c) at (0,0);
  \foreach \a [count =\i] in {30,90,150,210,270,330}{
    \coordinate (c1\i) at (\a : 1);
  }
  \draw [thick] (c11) -- (c12) -- (c13) -- (c14) -- (c15) -- (c16) -- (c11);
  \coordinate (c2c) at ($(c11)+(90:1)$); 
  \foreach \a [count =\i] in {30,90,150,210,270,330}{
    \coordinate (c2\i) at ($(c2c)+(\a : 1)$);
  }
  \draw [thick] (c21) -- (c22) -- (c23) -- (c24) -- (c25) -- (c26) -- (c21);
  \coordinate (c3c) at ($(c11)+(-30:1)$); 
  \foreach \a [count =\i] in {30,90,150,210,270,330}{
    \coordinate (c3\i) at ($(c3c)+(\a : 1)$);
  }
  \draw [thick] (c31) -- (c32) -- (c33) -- (c34) -- (c35) -- (c36) -- (c31);
  \node[triangle] at (c1c) {};
  \node[triangle] at (c2c) {};
  \node[triangle] at (c3c) {};
  \draw [dashed] (c3c) -- (c26) -- (c2c) -- (c12) -- (c1c) -- (c16) -- (c3c); 
  \node [fill=cyan,draw=cyan, rectangle, inner sep=0, minimum size=1.3mm] at ($(c1c)+(55:0.7)$) (u11) {};
  \node [fill=cyan,draw=cyan, rectangle, inner sep=0, minimum size=1.3mm] at ($(c1c)+(-5:0.5)$) (u12) {};
\node [fill=cyan,draw=cyan, rectangle, inner sep=0, minimum size=1.3mm] at ($(c2c)+(-80:0.7)$) (u21) {};
  \node [fill=cyan,draw=cyan, rectangle, inner sep=0, minimum size=1.3mm] at ($(c2c)+(-105:0.5)$) (u22) {};
\node [fill=cyan,draw=cyan, rectangle, inner sep=0, minimum size=1.3mm] at ($(c3c)+(120:0.7)$) (u31) {};
  \node [fill=cyan,draw=cyan, rectangle, inner sep=0, minimum size=1.3mm] at ($(c3c)+(185:0.75)$) (u32) {};
  \end{tikzpicture}
%\label{fig_first_case}
}\hfil
\subfloat[]{  
  \begin{tikzpicture}[scale=0.75, every node/.style={scale=0.75}]
   \coordinate (c) at (0,0);
   \node[draw=black, circle,thick, inner sep=0, minimum size=2.8cm] (c) {};
   \foreach \i in {1,2,3,4,5}{
   %\pgfmathsetmacro\a{\i * 72};
   \node[triangle] at (\i * 72 :1.4cm) (BS\i) {};
   }
   \node [fill=cyan,draw=cyan, rectangle, inner sep=0, minimum size=1.3mm] at (50:1) (u1) {};
   \node [fill=cyan,draw=cyan, rectangle, inner sep=0, minimum size=1.3mm] at (120:1.2) (u1) {};
   \node [fill=cyan,draw=cyan, rectangle, inner sep=0, minimum size=1.3mm] at (190:0.8) (u1) {};
   \node [fill=cyan,draw=cyan, rectangle, inner sep=0, minimum size=1.3mm] at (270:0.9) (u1) {};
   \node [fill=cyan,draw=cyan, rectangle, inner sep=0, minimum size=1.3mm] at (350:0.4) (u1) {};
\end{tikzpicture}
}\hfil
\subfloat[]{
  \begin{tikzpicture}[scale=0.5, every node/.style={scale=0.5}]
   \coordinate (c1c) at (0,0);
    \foreach \a [count =\i] in {30,90,150,210,270,330}{
      \coordinate (c1\i) at (\a : 1);
    }
    \draw [thick] (c11) -- (c12) -- (c13) -- (c14) -- (c15) -- (c16) -- (c11);
    \coordinate (c2c) at ($(c11)+(-30:1)$);
    \foreach \a [count =\i] in {30,90,150,210,270,330}{
      \coordinate (c2\i) at ($(c2c)+(\a : 1)$);
    }
    \draw [thick] (c21) -- (c22) -- (c23) -- (c24) -- (c25) -- (c26) -- (c21);
    \coordinate (c3c) at ($(c11)+(90:1)$); 
    \foreach \a [count =\i] in {30,90,150,210,270,330}{
      \coordinate (c3\i) at ($(c3c)+(\a : 1)$);
    }
    \draw [thick] (c31) -- (c32) -- (c33) -- (c34) -- (c35) -- (c36) -- (c31);
    \coordinate (c4c) at ($(c13)+(90:1)$); 
    \foreach \a [count =\i] in {30,90,150,210,270,330}{
      \coordinate (c4\i) at ($(c4c)+(\a : 1)$);
    }
    \draw [thick] (c41) -- (c42) -- (c43) -- (c44) -- (c45) -- (c46) -- (c41);
    \coordinate (c5c) at ($(c13)+(210:1)$); 
    \foreach \a [count =\i] in {30,90,150,210,270,330}{
      \coordinate (c5\i) at ($(c5c)+(\a : 1)$);
    }
    \draw [thick] (c51) -- (c52) -- (c53) -- (c54) -- (c55) -- (c56) -- (c51);
    \coordinate (c6c) at ($(c14)+(-90:1)$); 
    \foreach \a [count =\i] in {30,90,150,210,270,330}{
      \coordinate (c6\i) at ($(c6c)+(\a : 1)$);
    }
    \draw [thick] (c61) -- (c62) -- (c63) -- (c64) -- (c65) -- (c66) -- (c61);
    \coordinate (c7c) at ($(c16)+(-90:1)$);
    \foreach \a [count =\i] in {30,90,150,210,270,330}{
      \coordinate (c7\i) at ($(c7c)+(\a : 1)$);
    }
    \draw [thick] (c71) -- (c72) -- (c73) -- (c74) -- (c75) -- (c76) -- (c71);
    \node[triangle] at (c1c) {};
  \node[triangle] at (c2c) {};
  \node[triangle] at (c3c) {};
  \node[triangle] at (c4c) {};
  \node[triangle] at (c5c) {};
  \node[triangle] at (c6c) {};
  \node[triangle] at (c7c) {};
  \node [fill=cyan,draw=cyan, rectangle, inner sep=0, minimum size=1.3mm] at ($(c1c)+(350:0.4)$) (u1) {};
  \node [fill=cyan,draw=cyan, rectangle, inner sep=0, minimum size=1.3mm] at ($(c2c)+(50:0.70)$) (u1) {};
  \node [fill=cyan,draw=cyan, rectangle, inner sep=0, minimum size=1.3mm] at ($(c3c)+(180:0.7)$) (u1) {};
  \node [fill=cyan,draw=cyan, rectangle, inner sep=0, minimum size=1.3mm] at ($(c4c)+(40:0.5)$) (u1) {};
  \node [fill=cyan,draw=cyan, rectangle, inner sep=0, minimum size=1.3mm] at ($(c5c)+(220:0.6)$) (u1) {};
  \node [fill=cyan,draw=cyan, rectangle, inner sep=0, minimum size=1.3mm] at ($(c6c)+(170:0.3)$) (u1) {};
  \node [fill=cyan,draw=cyan, rectangle, inner sep=0, minimum size=1.3mm] at ($(c7c)+(110:0.75)$) (u1) {};
  \end{tikzpicture}
}
\caption{Network topologies: a three-sector cluster, a 5-cell ring topology and a 7-cell hexagonal layout.}
\label{topologies}
\end{figure*}

\begin{table}
\centering
\caption{Simulation Parameters}
\begin{tabular}{|c|c|c|	}
\hline

\multicolumn{1}{|c|}{\multirow{3}{*}{Network} } &3-Sector 	 &  $(3,K,3\times 4)$ \\
\cline{2-3}
&Ring Topology & $(5,K,5\times 6)$ \\
\cline{2-3}
&Hexagonal Layout & $(7,K,4\times 4)$ \\
\hline
\multicolumn{2}{|c|}{BS-to-BS distance} & \multicolumn{1}{c|}{600m to 1800m}\\  
\hline
\multicolumn{2}{|c|}{Transmit power PSD} & \multicolumn{1}{c|}{-35dBm/Hz} \\
\hline
\multicolumn{2}{|c|}{Thermal noise PSD}  &\multicolumn{1}{c|}{-169dBm/Hz}\\
\hline
\multicolumn{2}{|c|}{Antenna gain} & \multicolumn{1}{c|}{10dBi}\\
\hline
\multicolumn{2}{|c|}{SINR gap} & \multicolumn{1}{c|}{6dB} \\
\hline
\multicolumn{2}{|c|}{Distance dependent pathloss} & \multicolumn{1}{c|}{128.1 +37$\log_{10}(d)$} \\
\hline
\multicolumn{2}{|c|}{Shadowing} &  \multicolumn{1}{c|}{Log-normal, 8dB SD}\\
\hline
\multicolumn{2}{|c|}{Fading} & \multicolumn{1}{c|}{Rayleigh}\\
\hline
\end{tabular}
\label{simparamsTable}
\end{table}

\subsection{Stage I: Partial Interference Alignment}

In the first stage, each user identifies $q$ dominant interferers from whom we attempt to null interference  using IA. The dominant interferers (BSs) are identified based on the strength of the interference caused at the user. Note from Corollary \ref{cor1} that for a given $(G,K,M\times N)$ cluster, the choice of $q$ is closely  dependent on the number of scheduled users; in fact, it is necessary that $q \leq \lfloor  \frac{M+N-1}{K} \rfloor -1$. This suggests that higher the number of scheduled users, fewer  the number of interferers that can be nulled and vice versa. Thus, the number of scheduled users, $K$, emerges as a crucial parameter governing the usefulness of IA.

For a fixed $K$, set $q=\lfloor  \frac{M+N-1}{K} \rfloor -1$. The $q$ dominant interferers are identified by their interference  strength with the transmit and receive beamformers set to certain predetermined values. In our simulations we set all beamformers  to be equal to the all-ones vector.

Once the dominant interferers are identified, we then ensure that the chosen set of BS-user  pairs, denoted as $\mathcal{I}$, conforms to the condition for feasibility of partial IA as stated in  Corollary \ref{cor1}. Constructing a matrix analogous to that shown in Fig.~\ref{feasibilityTable}, it is easy to see that while the rows of this matrix have no more than $q$ chosen entries by  construction, the columns may have  more than $Kq$ chosen entries. To eliminate such cases, if any column has more than $Kq$ chosen  cells, we sort the chosen  cells of this column in the descending order of their interference strengths and prune this sorted  list, from the bottom,  until no more than $Kq$ cells are left. The set of BS-user pairs that results at the end of this  process (denoted as $\tilde{\mathcal{I}}$), satisfies the conditions imposed by  Corollary \ref{cor1} thus ensuring the feasibility of partial IA. As a result of the pruning, not all users have interference from all their $q$ dominant  interferers  nulled; but on average many if not most of them do. Note that for the case  $q=G-1$, no such pruning is necessary. An outline of the above procedure is given in Algorithm \ref{alg:PIA}.

Once $\tilde{\mathcal{I}}$ is obtained, aligned beamformers satisfying the conditions for partial IA can be designed using any algorithm developed for IA such as interference leakage minimization \cite{gomadam,zhuangfeasibility,guillaudgesbert}, iterative matrix norm minimization \cite{gokulTSP}, etc. 

\subsection{Stage II: Utility Maximization}

This stage focuses on maximizing a given network utility function using the aligned beamformers obtained in the previous stage as the initialization. As stated before, this paper focuses on maximizing either the sum-rate or the minimum rate for the scheduled users subject to per-BS power constraints. The proposed optimization framework is outlined in Fig.~\ref{optflowchart}. A brief description of the optimization problems that need to be solved for utility maximization follows.

\subsubsection{Sum-rate maximization}
Maximizing the sum-rate requires solving the following optimization problem. 
\begin{align}
%\begin{aligned}
\maximize_{\bt v_{gk},\ \bt u_{gk}} & \quad \sum_{g,k}  \log  \left (1+\tfrac{| \bt u^H_{gk} \bt H_{(g,gk)} \bt v_{gk}|^2}{\sigma_{n}^2+\nu_{gk}^2+\sum \limits_{(i,j)\neq (g,k)}|\bt u^H_{gk} \bt H_{(i,gk)} \bt v_{ij}|^2} \right ) \nonumber \\
\text{subject to} & \quad \sum_{k=1}^K \bt |\bt v_{gk}|^2\leq P_{max}, \quad \forall g, \nonumber \\
& \quad |\bt u_{gk}|^2=1, \quad \forall (g,k),
%\end{aligned}
\label{sumratemax}
\end{align}
where $\sigma_{n}^2$ represents the variance of additive noise and $\nu_{gk}^2$ is the variance of out-of-cluster interference (for isolated clusters, this term is set to zero). No convex reformulations of the above problem are known and hence one can at best hope to obtain a locally optimal solution. A locally optimal solution can be obtained through a computationally efficient algorithm, proposed in \cite{christensen,wmmse}, known as the weighted minimum mean-squared error (WMMSE) algorithm. For further details on this algorithm refer to \cite{christensen} and \cite{wmmse}. The algorithm is initialized to the aligned transmit and receive beamformers obtained from the pre-optimization step.

\subsubsection{Max-min fairness}
In order to maximize the minimum user rate achieved by the set of scheduled users in the given cluster, we solve the following optimization problem:
%\begin{equation}
\begin{align}
%\begin{aligned}
\maximize_{\bt v_{gk},\ \bt u_{gk}} & \quad t \nonumber \\
\text{subject to} & \quad \tfrac{| \bt u^H_{gk} \bt H_{(g,gk)} \bt v_{gk}|^2}{\sigma_n^2+\nu_{gk}^2+\sum \limits_{(i,j)\neq (g,k)}|\bt u^H_{gk} \bt H_{(i,gk)} \bt v_{ij}|^2} \geq t, \quad \forall (g,k),\nonumber \\
& \quad \sum_{k=1}^K \bt |\bt v_{gk}|^2\leq P_{max}, \quad \forall g, \nonumber \\
& \quad |\bt u_{gk}|^2=1, \quad \forall (g,k),
%\end{aligned}
\label{minmaxrate}
\end{align}
%\end{equation}
where $\bt v_{gk}$, $\bt u_{gk}$ are the variables for optimization, $P_{max}$ is the maximum transmit power permitted at any BS and $\sigma_{n}^2$ and $\nu^2_{gk}$ are as defined earlier. This problem is non-convex in its current form and no convex reformulation is known except when the users have a single  antenna. Several techniques for finding a local optimum of this problem have been proposed  \cite{minmaxyichao,minmaxraza,minmaxCWT}. We solve (\ref{minmaxrate}) by   alternately optimizing the transmit and receive beamformers, leveraging the  convex reformulation that emerges when users have a single antenna \cite{wiesel}.  Fixing the receive beamformers to be the aligned beamformers obtained from the first stage, we use  a bisection search over $t$ to find a maximal min-rate as proposed in \cite{wiesel}. Fixing  the transmit beamformers to those obtained at the end of this bisection search, the optimal receive  beamformers are given by the MMSE beamformers. Once the receive beamformers are updated, we proceed to re-optimize the transmit beamformers and this procedure is repeated for a fixed number of iterations.

\section{Simulation Results}
\label{section_simulations}
\subsection{Isolated Clusters}

The value of IA is best illustrated in a dense cluster of isolated BSs where interference mitigation plays  an increasingly important role as the distance between BSs decreases. Towards this end, we consider  three network topologies with increasing cluster sizes to test the proposed framework. As shown in Fig.~\ref{topologies}, the first  network is a 3-sector cluster, the second consists of 5 BSs spread out on a ring and the  third is a 7-cell hexagonal cluster. Same pathloss, shadowing and fading assumptions are made for  all three networks. Users are assumed to be uniformly distributed in each cell, and are served by  one data stream each. Table \ref{simparamsTable} lists the antenna configuration for each of the networks, along with other parameter settings.

For each network, the number of scheduled users per cell, $K$, is varied from $\left \lfloor\frac{M+N-1}{G} \right \rfloor$ to  $N$. Note that as $K$ increases, the number of dominant BSs that can be cancelled in the first  stage decreases. When $K>\frac{M+N-1}{2}$, no dominant interferers can be nulled and the  beamformers are chosen to only cancel intra-cell interference.

For a given set of scheduled users, the proposed optimization framework is used to maximize either the minimum user rate or the sum-rate. For each user, interference from at most  $q= \lfloor \frac{M+N-1}{K} \rfloor -1$ interferers is nulled using the  interference leakage minimization algorithm \cite{gomadam}. Using these aligned beamformers as initialization, the optimization problem presented in (\ref{sumratemax}) or (\ref{minmaxrate}) is solved depending on the choice of the utility function. The algorithm in \cite{wmmse} is used to solve (\ref{sumratemax}) and is run until convergence. To solve (\ref{minmaxrate}), transmit and receive beamformers are alternately optimized for a fixed number of iterations. The convex optimization problem arising from (\ref{minmaxrate}) for a fixed set of receive  beamformers is solved using CVX, a package for specifying and solving convex programs \cite{cvx1,cvx2}. The performance of the proposed framework is compared to the setup where the  first stage is omitted, i.e., the dominant interferers are not nulled using IA  (marked as `no IA'). The results of the optimization are averaged over 100 user locations.

\begin{figure}[t]
\begin{center}
%\hspace{-0.5cm}
\psfrag{1}[cc][cc][0.75]{$1$}
\psfrag{4}[cc][lc][0.75]{\shifttext{-2mm}{$4$}}
\psfrag{6}[cc][lc][0.75]{\shifttext{-2mm}{$6$}}
\psfrag{8}[cc][lc][0.75]{\shifttext{-2mm}{$8$}}
\psfrag{10}[cc][lc][0.75]{$10$}
\psfrag{12}[cc][lc][0.75]{$12$}
\psfrag{14}[cc][lc][0.75]{$14$}
\psfrag{16}[cc][lc][0.75]{$16$}
\psfrag{18}[cc][lc][0.75]{$18$}
\psfrag{x1}[cc][cc][0.75]{$600$}
\psfrag{x2}[cc][cc][0.75]{$900$}
\psfrag{x3}[cc][cc][0.75]{$1200$}
\psfrag{x4}[cc][cc][0.75]{$1500$}
\psfrag{x5}[cc][cc][0.75]{$1800$}
\psfrag{xlabel}[tc][cc][0.7]{distance between BSs (in meters)}
\psfrag{ylabel}[Bc][tc][0.7][0]{Average cell throughput (b/s/Hz)}
\psfrag{ILM}[Bl][Bl][0.72]{ILM}
\psfrag{K=2, no IA}[Bl][Bl][0.65]{$K=2$, no IA}
\psfrag{K=3, no IA}[Bl][Bl][0.65]{$K=3$, no IA}
\psfrag{K=4, no IA}[Bl][Bl][0.65]{$K=4$, no IA}
\psfrag{K=2, with IA, q=2}[Bl][Bl][0.65]{$K=2$, with IA, $q=2$}
\psfrag{K=3, with IA, q=1}[Bl][Bl][0.65]{$K=3$, with IA, $q=1$}
\psfrag{K=4, with IA, q=0}[Bl][Bl][0.65]{$K=4$, with IA, $q=0$}
\includegraphics[width=3.4in]{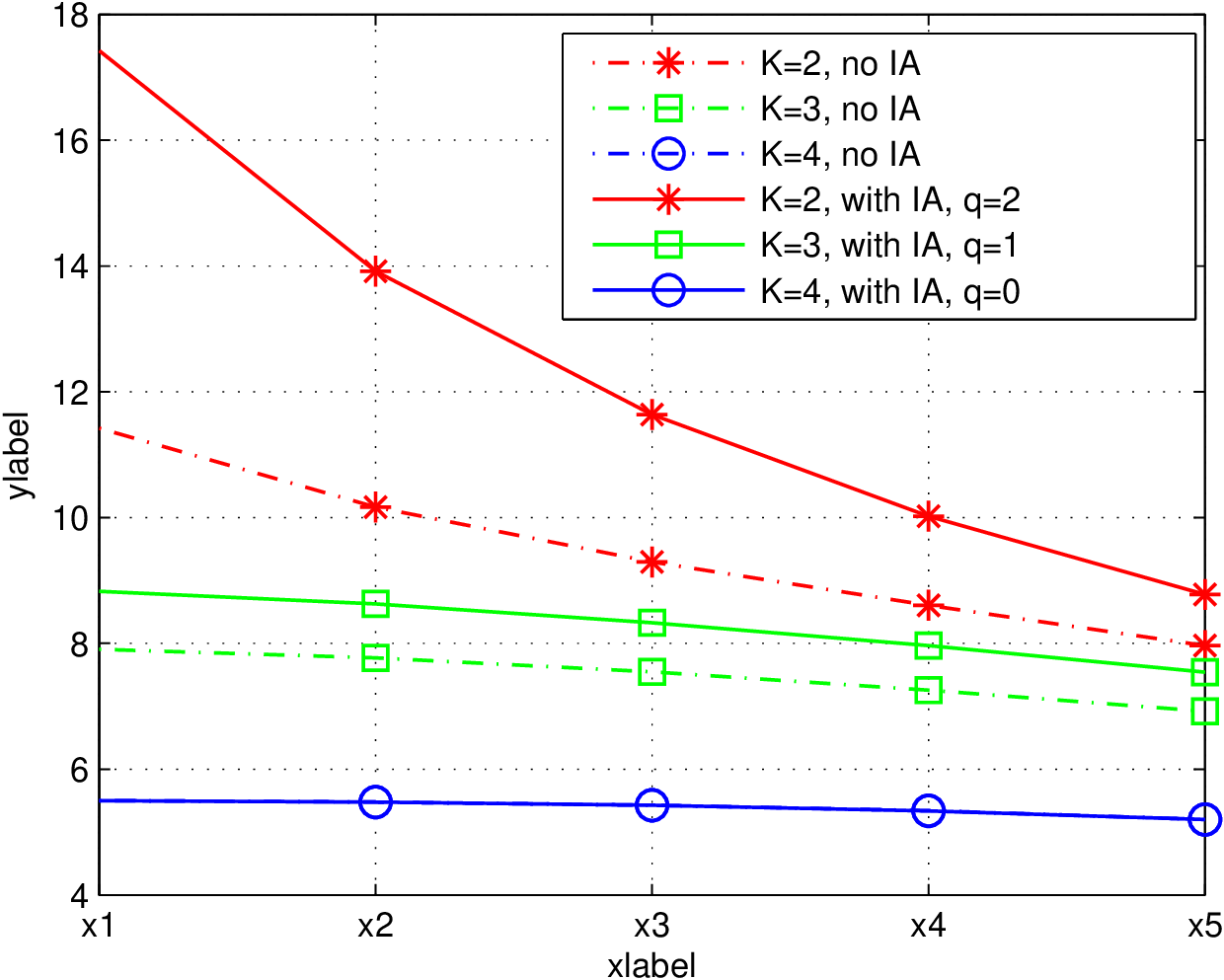}
	\caption{Average per-cell throughput in a $(3,K,3\times 4)$ network forming a 3-sector cluster when maximizing minimum user rate under per-BS power constraints. Cell-throughput is defined as $K$ times the minimum user rate.}
	\label{C3Riso}
      \end{center}
    \end{figure}

%average per-cell throughput, i.e., sum-rate per cell when maximizing sum-rate or $K$ times the max-min rate when maximizing the max-min rate, expressed in bit/s/Hz is obtained by 

\subsubsection{Maximizing the minimum rate}

\begin{figure*}[t]
\centering
\subfloat[]{
%\hspace{-0.5cm}
%\hspace{-0.5cm}
\psfrag{7}[cc][lc][0.65]{\shifttext{-2mm}{$7$}}
\psfrag{8}[cc][lc][0.65]{\shifttext{-2mm}{$8$}}
\psfrag{9}[cc][lc][0.65]{\shifttext{-2mm}{$9$}}
\psfrag{10}[cc][lc][0.65]{$10$}
\psfrag{11}[cc][lc][0.65]{$11$}
\psfrag{12}[cc][lc][0.65]{$12$}
\psfrag{13}[cc][lc][0.65]{$13$}
\psfrag{14}[cc][lc][0.65]{$14$}
\psfrag{15}[cc][lc][0.65]{$15$}
\psfrag{16}[cc][lc][0.65]{$16$}
\psfrag{17}[cc][lc][0.65]{$17$}
\psfrag{x1}[cc][cc][0.65]{$600$}
\psfrag{x2}[cc][cc][0.65]{$900$}
\psfrag{x3}[cc][cc][0.65]{$1200$}
\psfrag{x4}[cc][cc][0.65]{$1500$}
\psfrag{x5}[cc][cc][0.65]{$1800$}
\psfrag{xlabel}[tc][cc][0.75]{distance between BSs (in meters)}
\psfrag{ylabel}[Bc][tc][0.75][0]{Average cell throughput (b/s/Hz)}
\psfrag{ILM}[Bl][Bl][0.72]{ILM}
\psfrag{K=2, no IA}[Bl][Bl][0.56]{$K=2$, no IA}
\psfrag{K=3, no IA}[Bl][Bl][0.56]{$K=3$, no IA}
\psfrag{K=4, no IA}[Bl][Bl][0.56]{$K=4$, no IA}
\psfrag{K=5, no IA}[Bl][Bl][0.56]{$K=5$, no IA}
\psfrag{K=6, no IA}[Bl][Bl][0.56]{$K=6$, no IA}
\psfrag{K=2, with IA, q=4}[Bl][Bl][0.56]{$K=2$, with IA, $q=4$}
\psfrag{K=3, with IA, q=2}[Bl][Bl][0.56]{$K=3$, with IA, $q=2$}
\psfrag{K=4, with IA, q=1}[Bl][Bl][0.56]{$K=4$, with IA, $q=1$}
\psfrag{K=5, with IA, q=1}[Bl][Bl][0.56]{$K=5$, with IA, $q=1$}
\psfrag{K=6, with IA, q=0}[Bl][Bl][0.56]{$K=6$, with IA, $q=0$}
\includegraphics[width=2.9in]{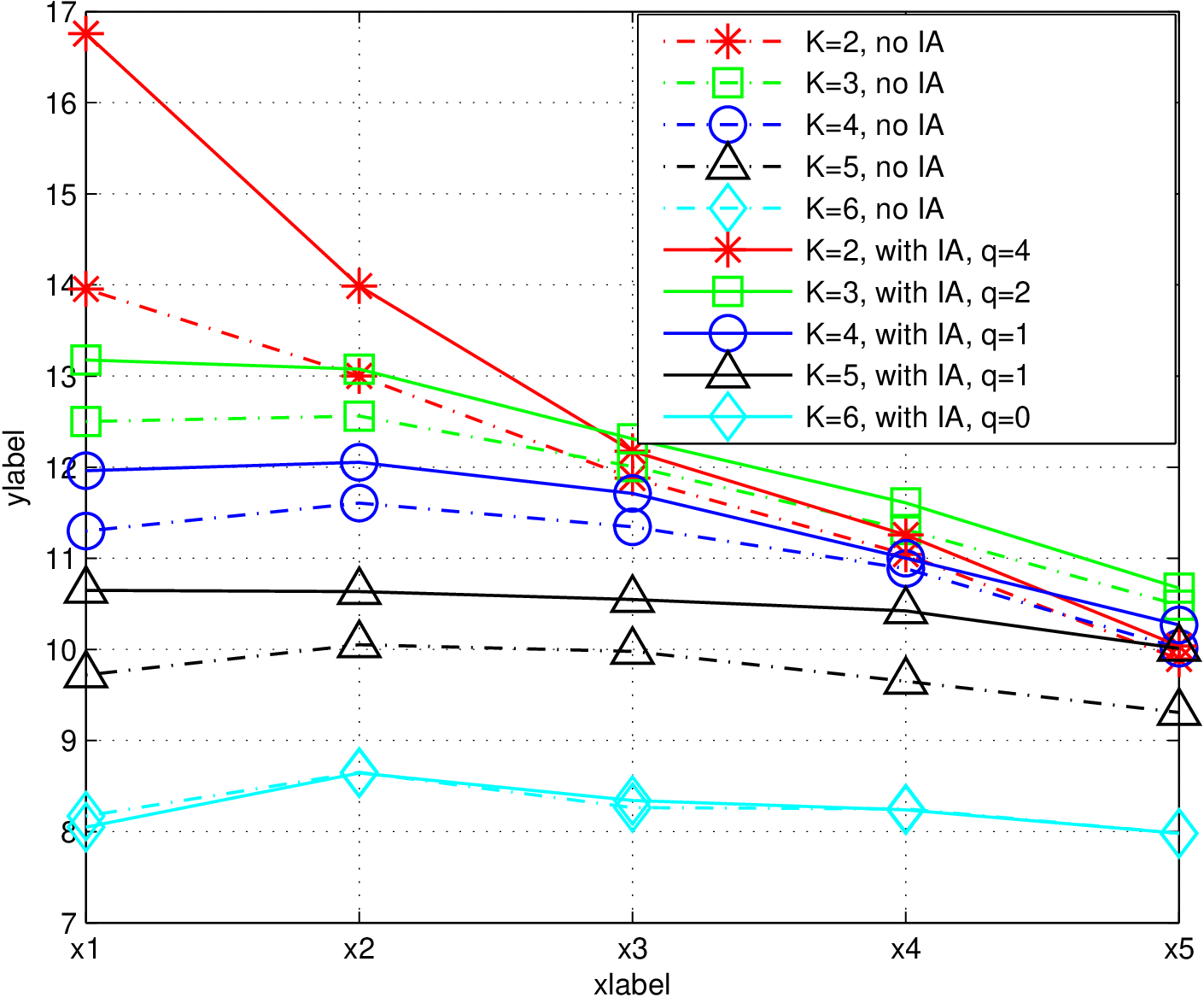}
%\caption{Average per-cell throughput in a $(5,K,5\times 6)$ ring topology when maximizing minimum user rate under per-BS power constraints.}
	\label{C5Riso}
}\hfil
\subfloat[]{
%\hspace{-0.5cm}
\psfrag{4}[cc][lc][0.65]{\shifttext{-2mm}{$4$}}
\psfrag{6}[cc][lc][0.65]{\shifttext{-2mm}{$6$}}
\psfrag{7}[cc][lc][0.65]{\shifttext{-2mm}{$7$}}
\psfrag{8}[cc][lc][0.65]{\shifttext{-2mm}{$8$}}
\psfrag{9}[cc][lc][0.65]{\shifttext{-2mm}{$9$}}
\psfrag{10}[cc][lc][0.65]{$10$}
\psfrag{6.5}[cc][cc][0.65]{}
\psfrag{7.5}[cc][cc][0.65]{}
\psfrag{8.5}[cc][cc][0.65]{}
\psfrag{9.5}[cc][cc][0.65]{}
\psfrag{12}[cc][lc][0.65]{$12$}
\psfrag{14}[cc][lc][0.65]{$14$}
\psfrag{16}[cc][lc][0.65]{$16$}
\psfrag{18}[cc][lc][0.65]{$18$}
\psfrag{x1}[cc][cc][0.65]{$600$}
\psfrag{x2}[cc][cc][0.65]{$900$}
\psfrag{x3}[cc][cc][0.65]{$1200$}
\psfrag{x4}[cc][cc][0.65]{$1500$}
\psfrag{x5}[cc][cc][0.65]{$1800$}
\psfrag{xlabel}[tc][cc][0.85]{distance between BSs (in meters)}
\psfrag{ylabel}[Bc][tc][0.85][0]{Average cell throughput (b/s/Hz)}
\psfrag{ILM}[Bl][Bl][0.72]{ILM}
\psfrag{K=1, no IA}[Bl][Bl][0.58]{$K=1$, no IA}
\psfrag{K=2, no IA}[Bl][Bl][0.58]{$K=2$, no IA}
\psfrag{K=3, no IA}[Bl][Bl][0.58]{$K=3$, no IA}
\psfrag{K=4, no IA}[Bl][Bl][0.58]{$K=4$, no IA}
\psfrag{K=1, with IA, q=6}[Bl][Bl][0.58]{$K=1$, with IA, $q=6$}
\psfrag{K=2, with IA, q=2}[Bl][Bl][0.58]{$K=2$, with IA, $q=2$}
\psfrag{K=3, with IA, q=1}[Bl][Bl][0.58]{$K=3$, with IA, $q=1$}
\psfrag{K=4, with IA, q=0}[Bl][Bl][0.58]{$K=4$, with IA, $q=0$}
\includegraphics[width=3.08in]{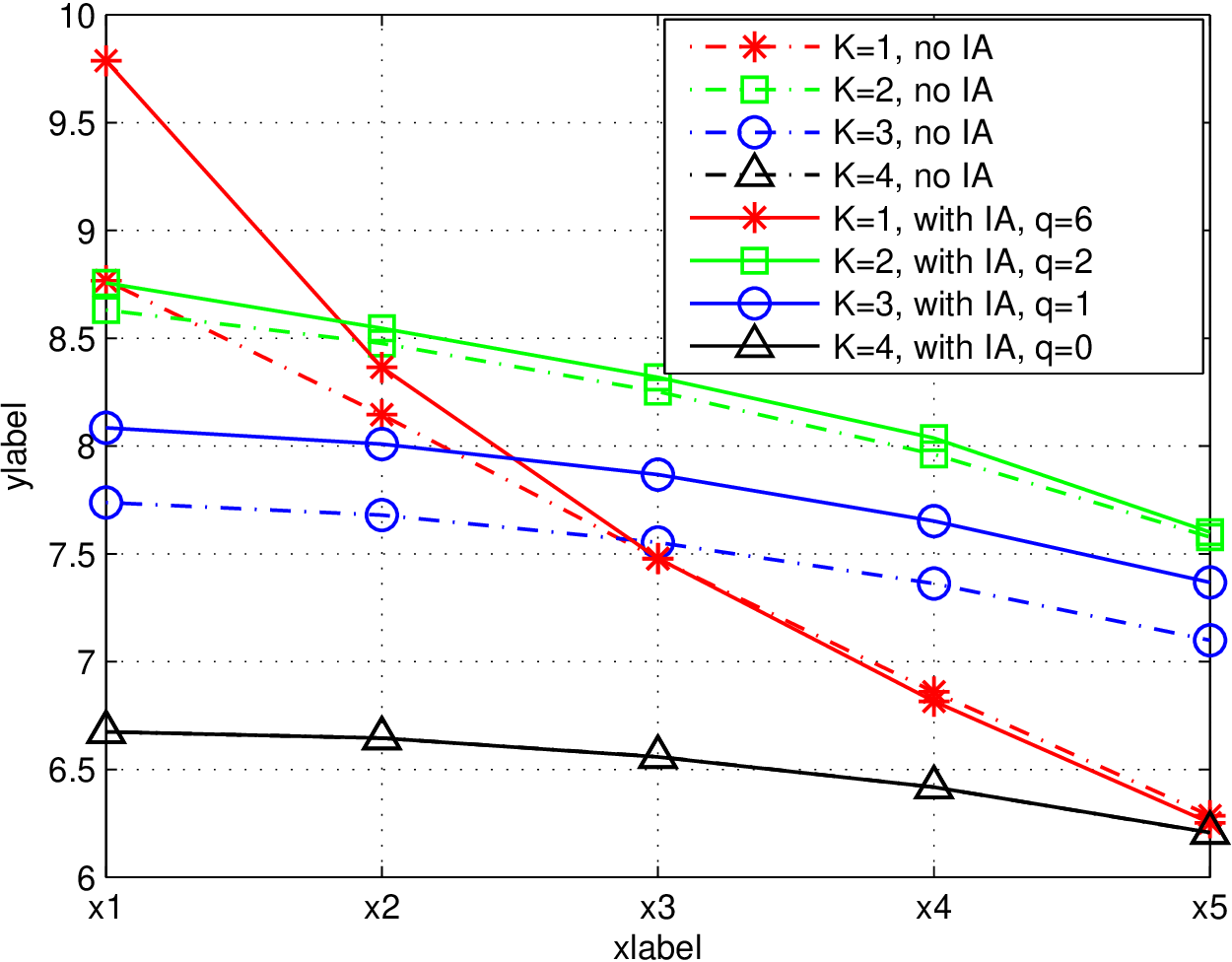}
%\caption{Average per-cell throughput in a $(7,K,4\times 4)$ hexagonal layout when maximizing minimum user rate under per-BS power constraints.}
	\label{C7Riso}
}
      \caption{Average per-cell throughput in (a) $(5,K,5\times 6)$ ring topology and (b) $(7,K,4\times 4)$ hexagonal layout when maximizing minimum user rate under per-BS power constraints. Cell-throughput is defined as $K$ times the minimum user rate.}
    \end{figure*}

\begin{figure*}[t]
\centering
\subfloat[]{
%\hspace{-0.5cm}
\psfrag{1}[cc][cc][0.75]{$1$}
\psfrag{4}[cc][lc][0.75]{\shifttext{-2mm}{$4$}}
\psfrag{6}[cc][lc][0.75]{\shifttext{-2mm}{$6$}}
\psfrag{8}[cc][lc][0.75]{\shifttext{-2mm}{$8$}}
\psfrag{9}[rc][lc][0.75]{$9$}
\psfrag{10}[cc][lc][0.75]{$10$}
\psfrag{12}[cc][lc][0.75]{$12$}
\psfrag{14}[cc][lc][0.75]{$14$}
\psfrag{16}[cc][lc][0.75]{$16$}
\psfrag{18}[cc][lc][0.75]{$18$}
\psfrag{11}[cc][lc][0.75]{$11$}
\psfrag{13}[cc][lc][0.75]{$13$}
\psfrag{15}[cc][lc][0.75]{$15$}
\psfrag{17}[cc][lc][0.75]{$17$}
\psfrag{x1}[cc][cc][0.75]{$600$}
\psfrag{x2}[cc][cc][0.75]{$900$}
\psfrag{x3}[cc][cc][0.75]{$1200$}
\psfrag{x4}[cc][cc][0.75]{$1500$}
\psfrag{x5}[cc][cc][0.75]{$1800$}
\psfrag{xlabel}[tc][cc][0.85]{\raisebox{-0.0cm}{distance between BSs (in meters)}}
\psfrag{ylabel}[Bc][tc][0.85][0]{Average cell throughput (b/s/Hz)}
\psfrag{ILM}[Bl][Bl][0.72]{ILM}
\psfrag{K=2, no IA}[Bl][Bl][0.65]{$K=2$, no IA}
\psfrag{K=3, no IA}[Bl][Bl][0.65]{$K=3$, no IA}
\psfrag{K=4, no IA}[Bl][Bl][0.65]{$K=4$, no IA}
\psfrag{K=2, with IA, q=2}[Bl][Bl][0.65]{$K=2$, with IA, $q=2$}
\psfrag{K=3, with IA, q=1}[Bl][Bl][0.65]{$K=3$, with IA, $q=1$}
\psfrag{K=4, with IA, q=0}[Bl][Bl][0.65]{$K=4$, with IA, $q=0$}
\includegraphics[width=3in]{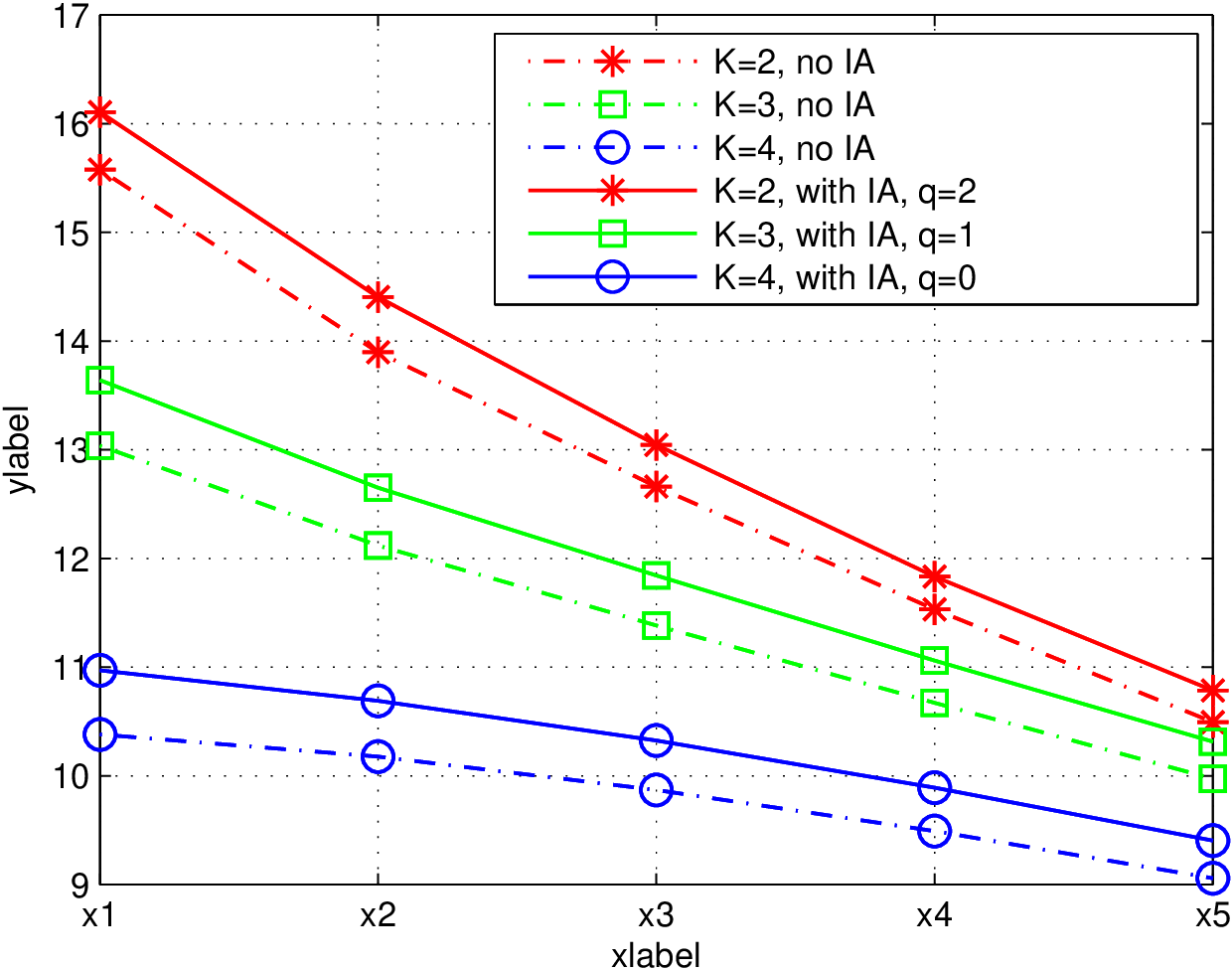}
%\begin{overpic}[scale=0.4,grid]{active_BS_1}\end{overpic}
%\includegraphics[width=3.43in,trim=1cm 1cm 1cm 1cm]{active_BS_1}
%\caption{Average per-cell throughput in a $(3,K,3\times 4)$ three-sector cluster. The algorithm in \cite{wmmse} is used to maximize the sum rate under per-BS power constraints.}
	\label{C3Risowmmse}
}\hfil
\subfloat[]{
%\hspace{-0.5cm}
\psfrag{4}[cc][lc][0.75]{\shifttext{-2mm}{$4$}}
\psfrag{6}[cc][lc][0.75]{\shifttext{-2mm}{$6$}}
\psfrag{7}[cc][lc][0.75]{\shifttext{-2mm}{$7$}}
\psfrag{8}[cc][lc][0.75]{\shifttext{-2mm}{$8$}}
\psfrag{9}[cc][lc][0.75]{\shifttext{-2mm}{$9$}}
\psfrag{10}[cc][lc][0.75]{$10$}
\psfrag{11}[cc][lc][0.75]{$11$}
\psfrag{6.5}[cc][cc][0.75]{}
\psfrag{7.5}[cc][cc][0.75]{}
\psfrag{8.5}[cc][cc][0.75]{}
\psfrag{9.5}[cc][cc][0.75]{}
\psfrag{10.5}[cc][cc][0.75]{}
\psfrag{11.5}[cc][cc][0.75]{}
\psfrag{12}[cc][lc][0.75]{$12$}
\psfrag{14}[cc][lc][0.75]{$14$}
\psfrag{16}[cc][lc][0.75]{$16$}
\psfrag{18}[cc][lc][0.75]{$18$}
\psfrag{x1}[cc][cc][0.75]{$600$}
\psfrag{x2}[cc][cc][0.75]{$900$}
\psfrag{x3}[cc][cc][0.75]{$1200$}
\psfrag{x4}[cc][cc][0.75]{$1500$}
\psfrag{x5}[cc][cc][0.75]{$1800$}
\psfrag{xlabel}[tc][cc][0.85]{\raisebox{0.0cm}{distance between BSs (in meters)}}
\psfrag{ylabel}[Bc][tc][0.85][0]{\raisebox{-0.27cm}{Average cell throughput (b/s/Hz)}}
\psfrag{ILM}[Bl][Bl][0.72]{ILM}
\psfrag{K=1, no IA}[Bl][Bl][0.55]{$K=1$, no IA}
\psfrag{K=2, no IA}[Bl][Bl][0.55]{$K=2$, no IA}
\psfrag{K=3, no IA}[Bl][Bl][0.55]{$K=3$, no IA}
\psfrag{K=4, no IA}[Bl][Bl][0.55]{$K=4$, no IA}
\psfrag{K=1, with IA, q=6}[Bl][Bl][0.55]{$K=1$, with IA, $q=6$}
\psfrag{K=2, with IA, q=2}[Bl][Bl][0.55]{$K=2$, with IA, $q=2$}
\psfrag{K=3, with IA, q=1}[Bl][Bl][0.55]{$K=3$, with IA, $q=1$}
\psfrag{K=4, with IA, q=0}[Bl][Bl][0.55]{$K=4$, with IA, $q=0$}
\includegraphics[width=3.06in]{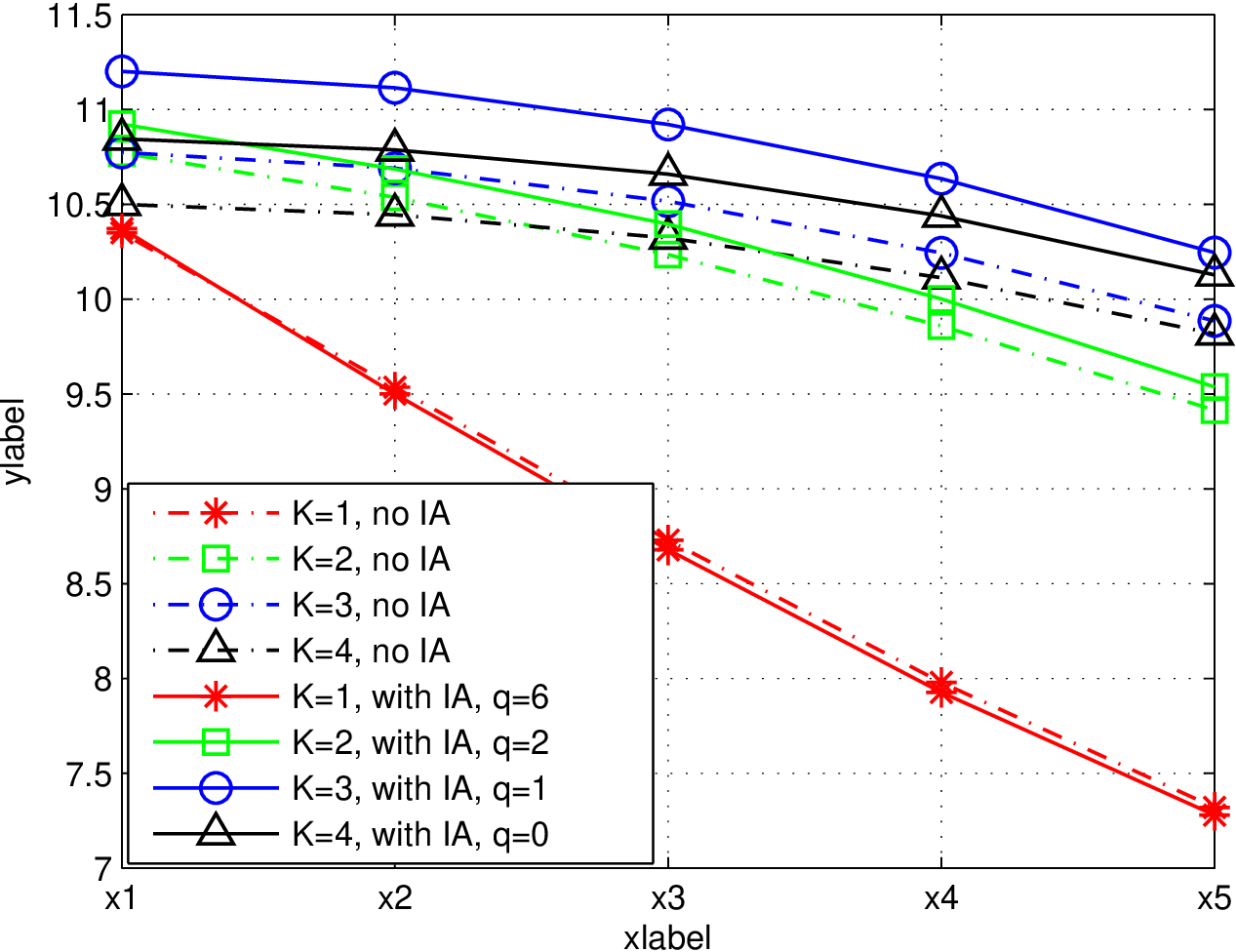}
%	\caption{Average per-cell throughput in a $(7,K,4\times 4)$ hexagonal layout. The algorithm in \cite{wmmse} is used to maximize the sum rate under per-BS power constraints.}
	\label{C7Risowmmse}
      }
      \caption{Average per-cell throughput in (a) $(3,K,3\times 4)$ three-sector cluster and (b) $(7,K,4\times 4)$ hexagonal layout when maximizing the sum-rate. The algorithm in \cite{wmmse} is used to maximize the sum-rate under per-BS power constraints.}
    \end{figure*}

Figs. \ref{C3Riso}, \ref{C5Riso} and \ref{C7Riso} plot the results of maximizing the minimum rate for each of  the three networks as a function of BS-to-BS distance and the number of scheduled users. Average cell throughput---measured as the max-min rate times the number of scheduled users $(K)$---is used as the performance metric for comparison. It is seen that IA solutions  provide an altered interference landscape that is otherwise non-trivial to find, and this altered landscape enhances the  performance of subsequent NUM algorithms. Focusing on Fig.~\ref{C3Riso}, it is clear that IA has a significant impact on  optimization, especially when BSs are closely spaced. The gain of IA depends on the number of users scheduled. In  particular, when 2~users/cell are scheduled, it is possible to achieve 1~DoF/user as interference can be completely  nulled in the network ($q=G-1=2$). In this case, IA provides 4-6~b/s/Hz improvement at small BS-to-BS distances. When  3~users/cell are scheduled, IA can cancel interference from up to one interferer for each user. Such IA  solutions are seen to enhance the average cell throughput by about 1~b/s/Hz. However, when 4~users/cell  are scheduled, only intra-cell interference can be nulled, and IA has no impact on the  optimization. Note also that because it is possible to completely null inter-cell interference only when  $K=2$ (or equivalently, $q=2$), this is the only scenario where throughput does not saturate as the BS-to-BS distance decreases. Finally, we comment that for a broad range of BS-to-BS distances, scheduling 2~users/cell appears to be optimal.

A similar set of observations can also be made in Fig.~\ref{C5Riso}. In particular, IA provides about 1~b/s/Hz gain when $K\leq 5$ and over a good range of BS-to-BS distances. However, unlike the 3-sector  network, nulling interference from all interferers (i.e., $q=4$, $K=2$) is not necessarily the best strategy, except at  very  small BS-to-BS distances. At larger distances it appears that nulling interference from the two dominant interferers  suffices ($q=2$, $K=3$).

Finally, Fig.~\ref{C7Riso} considers the 7-cell network---the only network, among the three  considered here, where not all cells are equivalent and pruning the list of dominant interferers  plays an important role in ensuring feasibility of partial IA. As expected, it can be seen that with  increasing cluster size, scheduling $K=\lfloor \frac{M+N-1}{G} \rfloor$ users (in this case, $K=1$, $q=6$), is a good strategy only at small BS-to-BS distances. In fact IA does not provide consistent rate  gain across all the cases. But the simulation does provide insight on the optimal number of users to schedule. It  appears that the number of  scheduled users should be such that nulling interference from one or two of the dominant  interferers for each user is feasible.

Surprisingly, in all three networks, scheduling as many users as  there are antennas does not appear to be the right choice even at large BS-to-BS distances.  Aggressive spatial multiplexing seems to severely limit the use of spatial resources to enhance  signal strength or to null interference.

\begin{figure*}[t]
\centering
\subfloat[]{
%\hspace{-0.5cm}
\psfrag{1}[cc][cc][0.75]{$1$}
\psfrag{0.4}[cc][cc][0.75]{\shifttext{-0mm}{$0.4$}}
\psfrag{0.6}[cc][cc][0.75]{\shifttext{-0mm}{$0.6$}}
\psfrag{0.8}[cc][cc][0.75]{\shifttext{-0mm}{$0.8$}}
\psfrag{0.2}[cc][cc][0.75]{\shifttext{-0mm}{$0.2$}}
\psfrag{0}[cc][cc][0.75]{$0$}
\psfrag{-10}[cc][cc][0.75]{$-10$}
\psfrag{-5}[cc][cc][0.75]{$-5$}
\psfrag{5}[cc][cc][0.75]{$5$}
\psfrag{10}[cc][cc][0.75]{$10$}
\psfrag{15}[cc][cc][0.75]{$15$}
\psfrag{20}[cc][cc][0.75]{$20$}
\psfrag{xtitle}[cc][cc][0.85]{Transmit power in dBm}
\psfrag{ytitle}[Bc][tc][0.85][0]{}
\psfrag{title}[Bc][tc][0.85][0]{}
\psfrag{with IA}[Bl][Bl][0.74]{with IA}
\psfrag{without IA}[Bl][Bl][0.74]{without IA}
\psfrag{K=2, no IA}[Bl][Bl][0.65]{$K=2$, no IA}
\psfrag{K=3, no IA}[Bl][Bl][0.65]{$K=3$, no IA}
\psfrag{K=4, no IA}[Bl][Bl][0.65]{$K=4$, no IA}
\psfrag{K=2, with IA, q=2}[Bl][Bl][0.65]{$K=2$, with IA, $q=2$}
\psfrag{K=3, with IA, q=1}[Bl][Bl][0.65]{$K=3$, with IA, $q=1$}
\psfrag{K=4, with IA, q=0}[Bl][Bl][0.65]{$K=4$, with IA, $q=0$}
\includegraphics[width=3.06in]{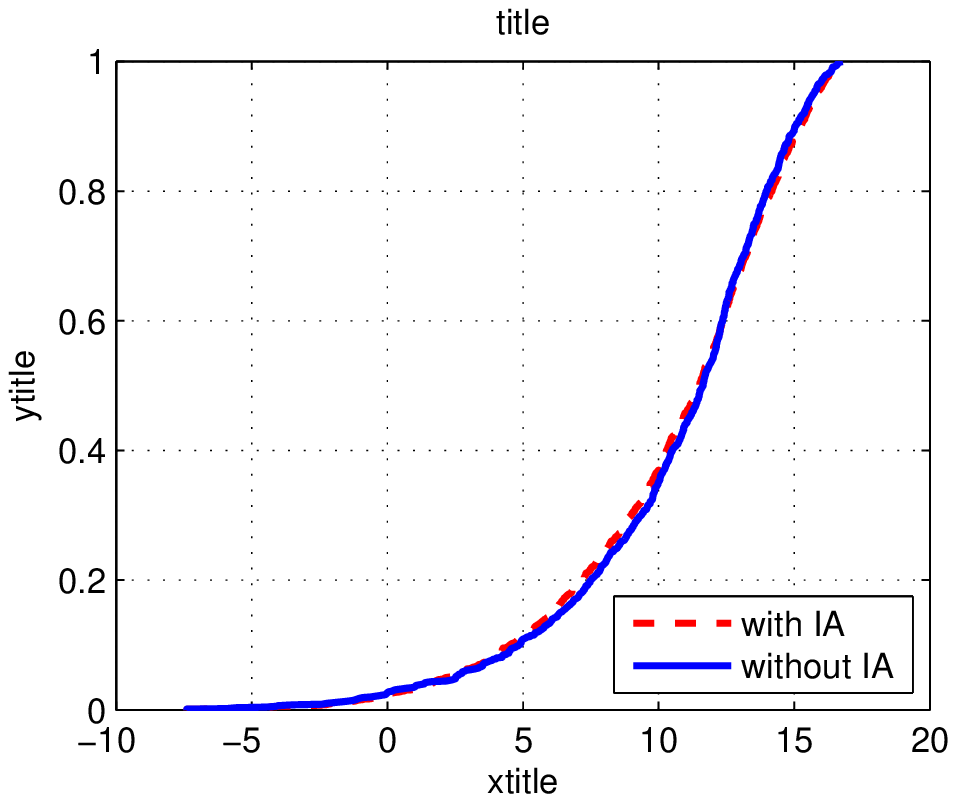}
%\begin{overpic}[scale=0.4,grid]{active_BS_1}\end{overpic}
%\includegraphics[width=3.43in,trim=1cm 1cm 1cm 1cm]{active_BS_1}
%\caption{Average per-cell throughput in a $(3,K,3\times 4)$ three-sector cluster. The algorithm in \cite{wmmse} is used to maximize the sum rate under per-BS power constraints.}
	\label{powerCDF}
}\hfil
\subfloat[]{
%\hspace{-0.5cm}
\psfrag{0.2}[cc][cc][0.75]{\shifttext{-0mm}{$0.2$}}
\psfrag{0.4}[cc][cc][0.75]{\shifttext{-0mm}{$0.4$}}
\psfrag{0.6}[cc][cc][0.75]{\shifttext{-0mm}{$0.6$}}
\psfrag{1}[cc][cc][0.75]{\shifttext{-0mm}{$1$}}
\psfrag{0.8}[cc][cc][0.75]{\shifttext{-0mm}{$0.8$}}
\psfrag{9}[cc][lc][0.75]{\shifttext{-2mm}{$9$}}
\psfrag{10}[cc][lc][0.75]{$10$}
\psfrag{11}[cc][lc][0.75]{$15$}
\psfrag{6.5}[cc][cc][0.75]{}
\psfrag{7.5}[cc][cc][0.75]{}
\psfrag{8.5}[cc][cc][0.75]{}
\psfrag{9.5}[cc][cc][0.75]{}
\psfrag{60}[cc][cc][0.75]{$60$}
\psfrag{40}[cc][cc][0.75]{$40$}
\psfrag{20}[cc][cc][0.75]{$20$}
\psfrag{0}[cc][cc][0.75]{$0$}
\psfrag{-20}[cc][cc][0.75]{$-20$}
\psfrag{0}[cc][cc][0.75]{$0$}
\psfrag{xtitle}[cc][cc][0.85]{\shifttext{-10mm}{SINR in dB}}
\psfrag{ytitle}[Bc][tc][0.85][0]{}
\psfrag{title}[Bc][tc][0.85][0]{}
\psfrag{with IA}[Bl][Bl][0.74]{with IA}
\psfrag{without IA}[Bl][Bl][0.74]{without IA}
\psfrag{K=1, no IA}[Bl][Bl][0.55]{$K=1$, no IA}
\psfrag{K=2, no IA}[Bl][Bl][0.55]{$K=2$, no IA}
\psfrag{K=3, no IA}[Bl][Bl][0.55]{$K=3$, no IA}
\psfrag{K=4, no IA}[Bl][Bl][0.55]{$K=4$, no IA}
\psfrag{K=1, with IA, q=6}[Bl][Bl][0.55]{$K=1$, with IA, $q=6$}
\psfrag{K=2, with IA, q=2}[Bl][Bl][0.55]{$K=2$, with IA, $q=2$}
\psfrag{K=3, with IA, q=1}[Bl][Bl][0.55]{$K=3$, with IA, $q=1$}
\psfrag{K=4, with IA, q=0}[Bl][Bl][0.55]{$K=4$, with IA, $q=0$}
\includegraphics[width=3.06in]{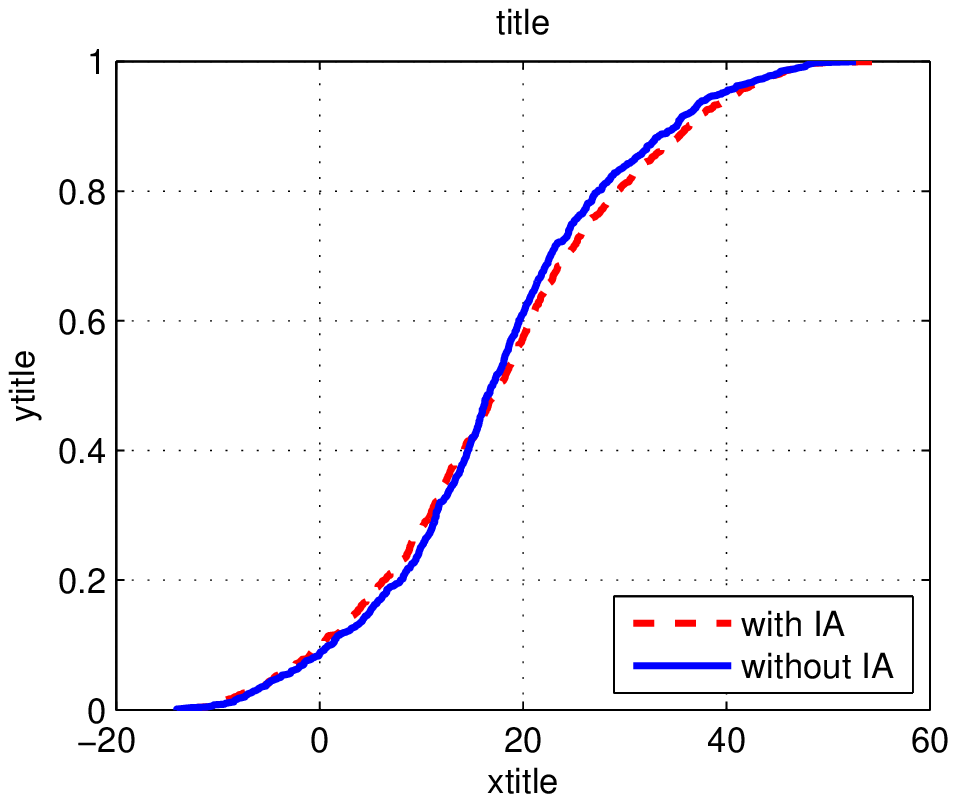}
	\label{sinrCDF}
      }
      \caption{Post-optimization CDFs of transmit powers (per user) and SINRs in a  $(3,3,3\times 4)$ network.}
      \label{CDFplot}
    \end{figure*}

\subsubsection{Maximizing the sum-rate}

The observations made above for maximizing the minimum rate are not necessarily applicable for maximizing the sum-rate. Figs. \ref{C3Risowmmse} and \ref{C7Risowmmse} plot the performance of the proposed framework with and without IA for the 3-sector and 7-cell topologies. It is clear that IA makes only a marginal difference to the overall throughput. The difference between the two utility functions can be explained by noting that when maximizing the sum-rate, unlike max-min fairness, even though $K$ users are scheduled in a time-frequency slot only a subset of these users get prioritized and increase in their throughput comes at the cost of the other scheduled users who are either allocated very little transmit power or face significant interference. This can be seen in Figs. \ref{powerCDF} and \ref{sinrCDF} where the cumulative distribution functions (CDFs) of transmit power and SINRs are plotted after the two-step optimization for the 3-sector $(3,3,3\times 4)$ network. In such a network there are a total of 9 users scheduled at each instance, and IA aims to null interference from 1 BS for each user. It can be seen that about $10\%$ of the users ultimately end up with an SINR less than 0 dB (equivalent to 1 user in every scheduling instance), while another $10\%$ of the users achieve an SINR exceeding 35 dB. Wide disparity in transmit power allocation can also be observed, with over $10\%$ of users receiving more than 15 dBm of transmit power out of a maximum of 16.9 dBm (per cell, per tone). This flexibility in prioritizing users and assigning resources undermines the value of the aligned beamformers that have been designed under the assumption that all $K$ users in a cell are equally important, leading to only marginal gains due to IA when maximizing the sum-rate. This also suggests that the post-optimization interference landscape, with changes in transmit power allocation and number of users with active transmissions (i.e., SINRs above a certain threshold), is so different that the initial assumptions on the dominant interferers are rendered irrelevant.

%is different enought to render initial assumptions on dominant interferers completely irrelevant placing ini

\subsection{Non-isolated Clusters}
It is also important to test the effectiveness of IA in an environment where the given cluster of cooperating BSs is surrounded by other non-cooperating BSs, which produce out-of-cluster interference. Towards this end, we simulate a 49-cell network forming a hexagonal topology with the central 7 cells forming a cluster similar to that shown in Fig.~\ref{topologies}. Thus, users in these 7 cells see out-of-cluster interference from 35 other BSs that surround them. Applying the proposed framework in such an environment while treating out-of-cluster interference as noise, it is seen from Fig.~\ref{C7RnonisominmaxA} that (a) density has little impact on the overall throughput and (b) aligned beamformers carry little significance. While not presented here, a similar set of results are obtained when maximizing the sum-rate as well. It is clear from the spectral efficiencies achieved that such environments are significantly limited by out-of-cluster interference. Nulling interference from a few dominant interferers while ignoring signal strength does not impact the final outcome of the optimization. These results suggest that when investigating beamformer design in practical cellular environments, focusing exclusively on the design of aligned beamformers does not warrant sufficient importance and that in such circumstances, more attention must be paid to the performance of NUM algorithms under various practical constraints such as CSI acquisition, etc.

\section{Conclusion}

\begin{figure}[t]
\begin{center}
%\hspace{-0.5cm}
\psfrag{y1}[cc][cc][0.75]{\shifttext{-3mm}{$2$}}
\psfrag{y2}[cc][cc][0.75]{\shifttext{-3mm}{$2.5$}}
\psfrag{y3}[cc][cc][0.75]{\shifttext{-3mm}{$3$}}
\psfrag{y4}[cc][cc][0.75]{\shifttext{-3mm}{$3.5$}}
\psfrag{y5}[cc][cc][0.75]{\shifttext{-3mm}{$4$}}
\psfrag{0y}[cc][cc][0.75]{\shifttext{-3mm}{$0$}}
\psfrag{11}[cc][lc][0.75]{$11$}
\psfrag{6.5}[cc][cc][0.75]{}
\psfrag{7.5}[cc][cc][0.75]{}
\psfrag{8.5}[cc][cc][0.75]{}
\psfrag{9.5}[cc][cc][0.75]{}
\psfrag{10.5}[cc][cc][0.75]{}
\psfrag{11.5}[cc][cc][0.75]{}
\psfrag{12}[cc][lc][0.75]{$12$}
\psfrag{14}[cc][lc][0.75]{$14$}
\psfrag{16}[cc][lc][0.75]{$16$}
\psfrag{18}[cc][lc][0.75]{$18$}
\psfrag{x1}[cc][cc][0.75]{$600$}
\psfrag{x2}[cc][cc][0.75]{$900$}
\psfrag{x3}[cc][cc][0.75]{$1200$}
\psfrag{x4}[cc][cc][0.75]{$1500$}
\psfrag{x5}[cc][cc][0.75]{$1800$}
\psfrag{xlabel}[tc][Bc][0.85]{distance between BSs (in meters)}
\psfrag{ylabel}[Bc][tc][0.85]{{Average cell throughput (b/s/Hz)}}
\psfrag{ILM}[Bl][Bl][0.72]{ILM}
\psfrag{K=1, no IA}[Bl][Bl][0.75]{$K=1$, no IA}
\psfrag{K=2, no IA}[Bl][Bl][0.75]{$K=2$, no IA}
\psfrag{K=3, no IA}[Bl][Bl][0.75]{$K=3$, no IA}
\psfrag{K=4, no IA}[Bl][Bl][0.75]{$K=4$, no IA}
\psfrag{K=1, with IA, q=6}[Bl][Bl][0.75]{$K=1$, with IA, $q=6$}
\psfrag{K=2, with IA, q=2}[Bl][Bl][0.75]{$K=2$, with IA, $q=2$}
\psfrag{K=3, with IA, q=1}[Bl][Bl][0.75]{$K=3$, with IA, $q=1$}
\psfrag{K=4, with IA, q=0}[Bl][Bl][0.75]{$K=4$, with IA, $q=0$}
\includegraphics[width=3.4in]{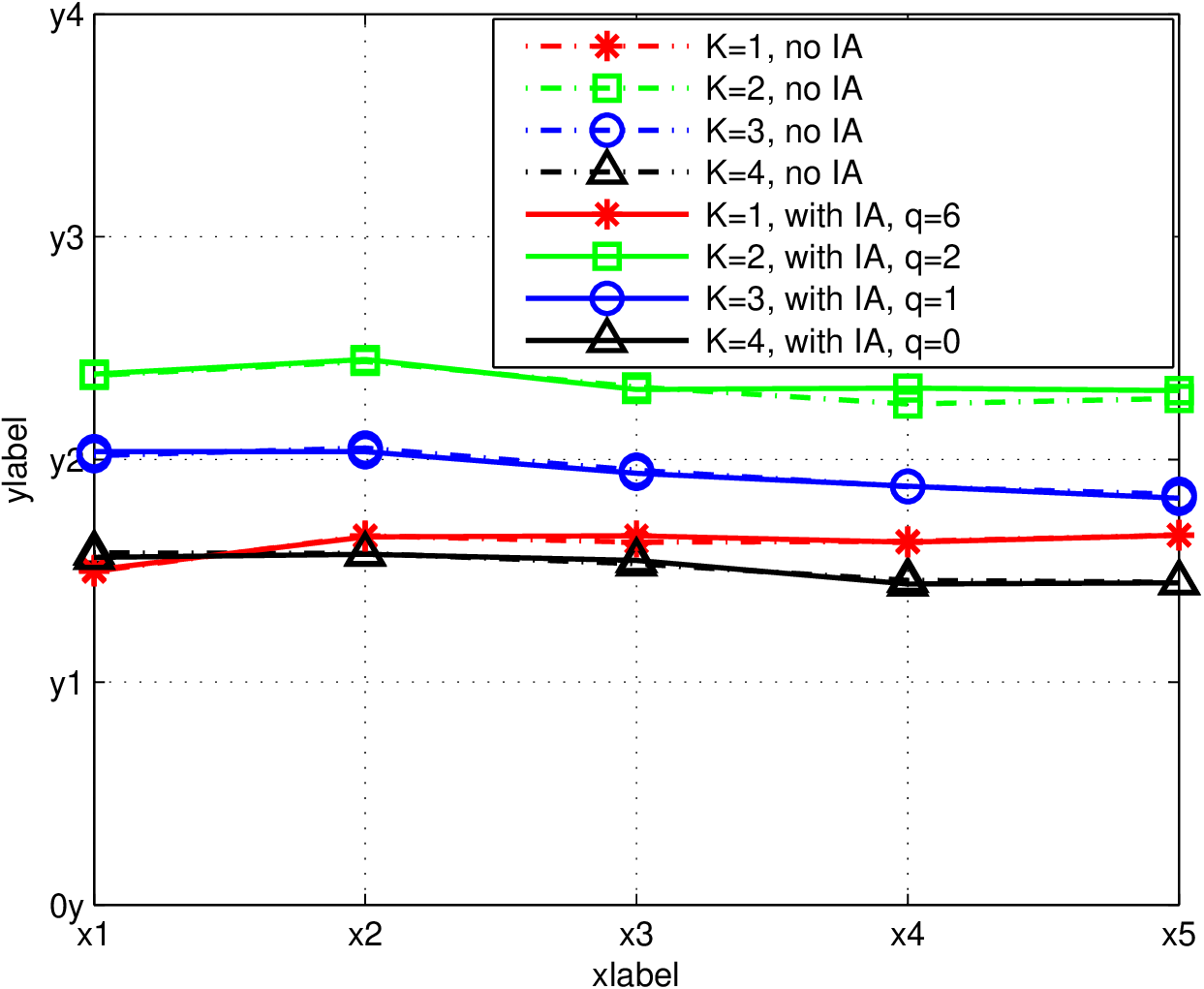}
\vspace{0mm}
	\caption{Average per-cell throughput in a $(7,K,4\times 4)$ network forming a 7-cell hexagonal cluster in a 49-cell hexagonal network when maximizing the minimum user rate under per-BS power constraints.}
	\label{C7RnonisominmaxA}
      \end{center}
    \end{figure}

Can IA impact wireless cellular network optimization? The evidence contained in this paper suggests that the impact of IA is quite limited even before accounting for the overhead and the required accuracy of CSI estimation. To arrive at this conclusion this paper first establishes certain fundamental results on feasibility of partial IA and uses these results to devise a two-stage optimization framework for NUM. The first stage of this framework focuses on interference nulling through partial IA followed by utility maximization in the second stage. The proposed framework is designed to leverage the strengths of IA and to overcome the shortcoming of conventional NUM algorithms. Through simulations on different cluster topologies with and without out-of-cluster interference, it is observed that IA is valuable in network topologies with a small number of BSs and without significant uncoordinated interference. In networks with significant out-of-cluster interference, nulling interference from a few dominant BSs does not appear to make an impact on the performance of NUM algorithms. Thus, in dense cellular networks, IA is likely to play a limited role even with centralized network optimization and full CSI.

\appendices

\section{Proof of Theorem \ref{feasibilitytheorem}}
\label{algebraicgeometryproof}
\subsection{Preliminaries}

The goal of this section is to present a concise introduction to the tools used in the proof of Theorem \ref{feasibilitytheorem}. Most of this material has been presented in various forms in earlier papers \cite{razaviyayn,annapureddy,tingtingliu} and is presented here for completeness and to bring more clarity to the concepts involved.

%This prensentation is meant to provide a more coherent introduction to the basic mathematical  We hope our presentation here can clear up some confusions readers might have regarding these mathematical background. 
\subsubsection{Transcendental Field Extensions}

Let $\F$ be a field and $\F[x_1, \ldots, x_n]$ and $\F(x_1, \ldots, x_n)$ denote the ring of polynomials and rational functions over $\F$ respectively. Let $\K$ be a field containing $\F$ and denote the field extension by $\K/\F$. 
\begin{definition}
An element $\alpha \in \K$ is \emph{algebraic} over $\F$ if there exists a nonzero $f \in \F[x]$ such that $f(\alpha) = 0$. If no such $f$ exists, then $\alpha$ is \emph{transcendental} over $\F$. A set $S =\{\alpha_1, \ldots, \alpha_n \} \subset \K$ is \emph{algebraically dependent} over $\F$ if there exists a nonzero $f \in \F[x_1, \ldots, x_n]$ such that $f(\alpha_1, \ldots, \alpha_n) = 0$. Otherwise $S$ is \emph{algebraically independent} over $\F$. 
\end{definition}
Clearly, algebraic independent elements over $\F$ are transcendental over $\F$. 
%\begin{example}
%Consider $\Q \subset \C$. The element $\sqrt{2} \in \C$ is algebraic over $\Q$ as it is a root of $f(x) = x^2 - 2$. The element $\pi \in \C$ is transcendental over $\Q$. The elements $\pi, \pi^2 \in \C$ (both transcendental) are \emph{not} algebraically independent over $\Q$, as $f(\pi,\pi^2) = 0$ for $f(x_1,x_2) = x_1^2 - x_2$. 
%\end{example} 
Let $S =\{\alpha_1, \ldots, \alpha_n \} \subset \K$ be an algebraically independent set over $\F$. We can consider \emph{adjoining} the elements of $S$ to $\F$, denoted by $\F(S) = \F(\alpha_1, \ldots, \alpha_n)$. $\F(S)$ is defined to be the smallest field extension of $\F$ containing all elements of $S$. The following lemma shows that the field $\F(S)$ has an easy representation. 
\begin{lemma}
Let $\K/\F$ be a field extension. If $\alpha_1, \ldots, \alpha_n \in \K$ are algebraically independent over $\F$, then $\F(\alpha_1, \ldots, \alpha_n)$ and $\F(x_1, \ldots, x_n)$ are isomorphic (as field extensions of $\F$). 
\end{lemma}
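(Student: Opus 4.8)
The plan is to build the isomorphism in two steps: first at the level of polynomial rings, then pass to fraction fields. Define the evaluation homomorphism $\varphi\colon \F[x_1,\ldots,x_n]\to\K$ that is the identity on $\F$ and sends $x_i\mapsto\alpha_i$; concretely $\varphi(f)=f(\alpha_1,\ldots,\alpha_n)$. This is a well-defined ring homomorphism whose image is the subring $\F[\alpha_1,\ldots,\alpha_n]\subseteq\K$. By the definition of algebraic independence, $f(\alpha_1,\ldots,\alpha_n)=0$ forces $f=0$, so $\ker\varphi=\{0\}$ and $\varphi$ restricts to a ring isomorphism $\F[x_1,\ldots,x_n]\xrightarrow{\sim}\F[\alpha_1,\ldots,\alpha_n]$ that fixes $\F$ pointwise.

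Next I would invoke the universal property of the field of fractions. Both $\F[x_1,\ldots,x_n]$ and $\F[\alpha_1,\ldots,\alpha_n]$ are integral domains (the latter since it is a subring of the field $\K$), so an isomorphism between them extends uniquely to an isomorphism of their fraction fields via $f/g\mapsto\varphi(f)/\varphi(g)$; one checks in the standard way that this is independent of the chosen representative and is bijective. The fraction field of $\F[x_1,\ldots,x_n]$ is exactly $\F(x_1,\ldots,x_n)$ by definition.

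It then remains to identify the fraction field of $\F[\alpha_1,\ldots,\alpha_n]$, formed inside $\K$, with $\F(\alpha_1,\ldots,\alpha_n)$. On one hand, every ratio $\varphi(f)/\varphi(g)$ with $\varphi(g)\neq 0$ lies in any subfield of $\K$ containing $\F$ and $\alpha_1,\ldots,\alpha_n$, so this fraction field is contained in $\F(\alpha_1,\ldots,\alpha_n)$. On the other hand, the fraction field is itself a subfield of $\K$ containing $\F$ and all the $\alpha_i$, hence contains the smallest such field, namely $\F(\alpha_1,\ldots,\alpha_n)$. Equality follows, and since the extended map still fixes $\F$ pointwise, it is the required isomorphism of field extensions of $\F$.

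There is no serious obstacle here; the statement amounts to bookkeeping around the evaluation map and the universal property of localization. The only points demanding a little care are the first reduction — that algebraic independence is precisely the condition making $\varphi$ injective — and the routine verification that the fraction-field extension of $\varphi$ is well-defined; everything else is forced.
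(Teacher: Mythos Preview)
Your argument is correct and is the standard textbook proof of this lemma. The paper does not actually supply a proof of this statement; it is listed among the preliminaries that are ``presented here for completeness'' and is simply stated without justification. So there is nothing to compare against, and your write-up would serve perfectly well as the missing proof.
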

\begin{definition}
A subset $S \subset \K$ is a \emph{transcendence basis} for $\K/\F$ if $S$ is algebraically independent over $\F$ and $\K$ is algebraic over $\F(S)$. 
\end{definition}
\begin{example}
Let $\K = \F(x_1, \ldots, x_n)$, then ${x_1, \ldots, x_n}$ is a transcendence basis for $\K/\F$. 
\end{example}
We should expect any two bases to have the same size, and this is indeed the case. We shall define this invariant. 
\begin{definition}
The \emph{transcendence degree} $\text{trdeg}(\K/\F)$ of a field extension $\K/\F$ is the cardinality of any transcendence basis of $\K/\F$. 
\end{definition}
The tools we developed so far gives us the following proposition, which we will use as a key step in the necessary part of Theorem \ref{feasibilitytheorem}.
\begin{proposition}
\label{main_nec}
Let $\K = \F(x_1, \ldots, x_n)$. Any set $S =\{\alpha_1, \ldots, \alpha_m \} \subset \K$ with $m > n$ is algebraically dependent over $\F$. 
\end{proposition}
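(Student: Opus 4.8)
The plan is to deduce this from the structure theory of transcendence bases just developed. By the Example preceding the proposition, $\{x_1,\dots,x_n\}$ is a transcendence basis of $\K/\F$, so $\operatorname{trdeg}(\K/\F)=n$, and the statement is exactly the assertion that an algebraically independent subset of $\K$ cannot be strictly larger than a transcendence basis. I would argue by contradiction: assume $S=\{\alpha_1,\dots,\alpha_m\}\subset\K$ is algebraically independent over $\F$ with $m>n$, and work with the first $n+1$ of its elements $\alpha_1,\dots,\alpha_{n+1}$.

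The argument rests on two elementary facts that make algebraic dependence behave like linear dependence. \emph{Transitivity:} if $\gamma$ is algebraic over $\F(\beta_1,\dots,\beta_r)$ and each $\beta_j$ is algebraic over $\F(T)$, then $\gamma$ is algebraic over $\F(T)$; this is the standard ``a tower of algebraic extensions is algebraic'', applied after clearing denominators so that the defining polynomial relations have coefficients in the base field. \emph{Exchange:} if $\gamma$ is algebraic over $\F(\beta_1,\dots,\beta_r)$ but not over $\F(\beta_1,\dots,\beta_{r-1})$, then $\beta_r$ is algebraic over $\F(\beta_1,\dots,\beta_{r-1},\gamma)$; this follows by regarding the minimal polynomial of $\gamma$ over $\F(\beta_1,\dots,\beta_r)$ as a polynomial in $\beta_r$ and noting that its $\beta_r$-degree is positive, since otherwise $\gamma$ would already be algebraic over $\F(\beta_1,\dots,\beta_{r-1})$.

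With these in hand I would run the Steinitz replacement. Starting from the trivial fact that $\K$ is algebraic over $\F(x_1,\dots,x_n)$, I would prove by induction on $k$ that, after a suitable reindexing of the $x_j$, $\K$ is algebraic over $\F(\alpha_1,\dots,\alpha_k,x_{k+1},\dots,x_n)$ for every $0\le k\le n$. At the $k$-th step one uses that $\alpha_k$ is algebraic over the current field $\F(\alpha_1,\dots,\alpha_{k-1},x_k,\dots,x_n)$ (since $\alpha_k\in\K$) but, by the algebraic independence of $S$, is \emph{not} algebraic over $\F(\alpha_1,\dots,\alpha_{k-1})$; choosing the first $x_j$ whose adjunction makes $\alpha_k$ algebraic and applying Exchange swaps that $x_j$ for $\alpha_k$, while Transitivity propagates ``$\K$ is algebraic over $\cdots$'' through the swap. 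Taking $k=n$ yields that $\K$ is algebraic over $\F(\alpha_1,\dots,\alpha_n)$; hence $\alpha_{n+1}\in\K$ satisfies a nonzero polynomial over $\F(\alpha_1,\dots,\alpha_n)$, and clearing denominators produces a nonzero $P\in\F[z_1,\dots,z_{n+1}]$ with $P(\alpha_1,\dots,\alpha_{n+1})=0$, contradicting the algebraic independence of $S$.

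The only genuinely non-formal ingredient is Transitivity; everything else is bookkeeping, and the main thing to be careful about is keeping track of exactly which elements have been adjoined to $\F$ at each stage of the replacement and confirming that the reindexed list still generates $\K$ up to algebraic extension. A more hands-on alternative would induct on $n$ directly: assume $\alpha_1$ is transcendental over $\F$ (otherwise $\{\alpha_1\}$, hence $S$, is already algebraically dependent), identify $\F(\alpha_1)$ with a rational function field in one variable, show that $\K$ has transcendence degree at most $n-1$ over $\F(\alpha_1)$, and invoke the inductive hypothesis on $\alpha_2,\dots,\alpha_m$ over $\F(\alpha_1)$; but this reduction reuses the same exchange-type input, so I would present the Steinitz argument.
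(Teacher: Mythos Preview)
Your proposal is correct and follows the same approach as the paper: the paper's proof is the single line ``Follows from the fact that $\text{trdeg}(\K/\F) = n < m$,'' which is exactly your opening observation. The remainder of your write-up is a careful (and correct) Steinitz-exchange justification of the invariance of transcendence degree that the paper simply takes for granted via its Definition of $\text{trdeg}$; so you have supplied strictly more detail than the paper, not a different argument.
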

\begin{proof}
Follows from the fact that $\text{trdeg}(\K/\F) = n < m$. 
\end{proof}

\subsubsection{Zariski Topology and a Theorem of Chevalley}:
Let $\K$ be an algebraically closed field (e.g. $\mathbb{C}$). Let $S \subset \K[x_1, \ldots, x_n]$ be a set of polynomials. Define the \emph{zero-locus} $Z(S)$ as:
\begin{align*}
Z(S) = \{x \in \K^n \mid f(x) = 0 \forall f \in S\}.
\end{align*}
A subset $V$ of $\K^n$ is called an \emph{affine algebraic set} if $V=Z(S)$ for some $S$. The \emph{Zariski topology} on $\K^n$ is defined by specifying the closed sets to be the affine algebraic sets. Thus, open sets are of the form $\K^n \setminus Z(S)$ for some $S \subset \K[x_1, \ldots, x_n]$. Intuitively, open sets are ``big'' in Zariski topology. This is made precise by the fact that open sets are dense (their closures are equal to $\K^n$). Zariski open sets allow us to define a property to be generic as follows.

\begin{definition}
A property of $\K^n$ is said to be true \emph{generically} if it is true over a non-empty Zariski open set of $\K^n$. 
\end{definition}
Closely related to open and closed sets is the concept of constructible sets. 
\begin{definition}
A set is \emph{locally closed} if it is the intersection of an open set with a closed set. A finite union of locally closed sets is called a \emph{constructible set}. 
\end{definition}
Two important facts related to constructible sets that are used in the proof are as follows.
\begin{proposition}
\label{opendense}
Every constructible set contains a dense open subset of its closure. 
\end{proposition}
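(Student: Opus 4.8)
The plan is to prove this statement in the general setting of a Noetherian topological space, which applies to $\K^n$ with the Zariski topology since $\K[x_1,\dots,x_n]$ is Noetherian (Hilbert basis theorem) and hence $\K^n$ satisfies the descending chain condition on closed sets; every closed set then has only finitely many irreducible components. First I would take the given constructible set $C$, write it as a finite union of non-empty locally closed pieces $U_i\cap Z_i$ with $U_i$ open and $Z_i$ closed, and refine this presentation: decomposing each $Z_i$ into its finitely many irreducible components and intersecting with $U_i$, I may assume that every $Z_i$ is \emph{irreducible}. Its closure $\bar C$ is closed, so it too has finitely many irreducible components $Y_1,\dots,Y_r$, none contained in another.

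Next I would reduce to the irreducible case. Since $C\subseteq\bar C$ we have $C=\bigcup_j (C\cap Y_j)$, hence $\bar C=\bigcup_j\overline{C\cap Y_j}$; because each $\overline{C\cap Y_j}\subseteq Y_j$ and the $Y_j$ are the maximal irreducible closed subsets of $\bar C$, irreducibility of each $Y_j$ forces $\overline{C\cap Y_j}=Y_j$, i.e. $C\cap Y_j$ is dense in $Y_j$. Now fix $j$. The set $C\cap Y_j$ is a finite union of non-empty locally closed sets $U_t\cap Z_t$ with $Z_t$ irreducible; each such piece is a non-empty open subset of $Z_t$, hence dense in it, so $\overline{U_t\cap Z_t}=Z_t$. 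Thus $Y_j=\overline{C\cap Y_j}=\bigcup_t Z_t$, and by irreducibility of $Y_j$ some $Z_{t_0}$ equals $Y_j$; then $W_j:=U_{t_0}\cap Y_j$ is a non-empty open subset of $Y_j$ with $W_j\subseteq C$.

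The hard part will be gluing the sets $W_j$ into a single set that is \emph{open in $\bar C$} (not merely open in the individual $Y_j$) and dense. Here I would separate the components: writing $W_j=O_j\cap Y_j$ with $O_j$ open in $\K^n$ and setting $F_j:=\bigcup_{i\neq j}Y_i$, one checks that $\bar C\setminus F_j\subseteq Y_j$, so $\bar C\setminus F_j=Y_j\setminus F_j$ is a non-empty (dense) open subset of the irreducible space $Y_j$ and is also open in $\bar C$. Define $\Omega_j:=W_j\cap(\bar C\setminus F_j)=O_j\cap(\bar C\setminus F_j)$; this is the intersection of two non-empty open subsets of the irreducible $Y_j$, hence non-empty and dense in $Y_j$, and it is open in $\bar C$ and contained in $W_j\subseteq C$. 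Finally $\Omega:=\bigcup_j\Omega_j$ is open in $\bar C$, contained in $C$, and $\overline{\Omega}\supseteq\bigcup_j\overline{\Omega_j}=\bigcup_j Y_j=\bar C$, so $\Omega$ is the required dense open subset of $\bar C$. The only points needing care are the reduction making each $Z_i$ irreducible and the verification that subtracting the other components leaves each $\Omega_j$ non-empty; both are routine once the irreducible-component bookkeeping is in place, and the degenerate case $C=\emptyset$ is trivial.
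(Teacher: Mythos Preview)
The paper does not actually prove this proposition: it is stated without proof in the preliminaries as a standard background fact from algebraic geometry, used only to feed into Proposition~\ref{main_suff}. So there is nothing to compare your argument against.

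Your argument is the standard one and is correct. One phrasing point: when you say ``$C\cap Y_j$ is a finite union of non-empty locally closed sets $U_t\cap Z_t$ with $Z_t$ irreducible,'' make explicit that you are re-applying the initial refinement (decompose the constructible set $C\cap Y_j$ afresh into locally closed pieces and split each closed part into irreducible components). Alternatively, and slightly more directly, you can skip the re-decomposition: from $\bar C=\bigcup_i Z_i$ (each $Z_i$ irreducible after your first refinement, and $\overline{U_i\cap Z_i}=Z_i$ since $U_i\cap Z_i$ is non-empty open in the irreducible $Z_i$), irreducibility of the component $Y_j\subseteq\bigcup_i Z_i$ forces $Y_j\subseteq Z_{i_0}$ for some $i_0$, and maximality of $Y_j$ among irreducible closed subsets of $\bar C$ then gives $Y_j=Z_{i_0}$, so $W_j:=U_{i_0}\cap Y_j\subseteq C$ is the desired non-empty open set in $Y_j$. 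Either route works; the gluing step via $\Omega_j=W_j\setminus\bigcup_{i\neq j}Y_i$ is clean and correct.
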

\begin{theorem}[Special case of Chevalley Theorem]
Let $f_1, \dots, f_n \in \K[x_1, \ldots, x_n]$, and define $f = (f_1, \ldots, f_n) : \K^n \rightarrow \K^n$ to be the corresponding polynomial map. Then the image of $f$ ($\text{Im}(f)$) is a constructible set. 
\end{theorem}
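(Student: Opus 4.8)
The plan is to show $\text{Im}(f)$ is constructible by Noetherian induction on the dimension of its Zariski closure, the heart of the matter being a geometric lemma that a dominant morphism of irreducible affine varieties contains a dense open subset of its target in its image. Two preliminary observations set this up. First, since $\K^n$ is irreducible and $f$ is Zariski-continuous, $\text{Im}(f)$ is irreducible and $Y := \overline{\text{Im}(f)}$ is an irreducible affine variety with $f : \K^n \to Y$ dominant. Second, the induction forces me to prove the slightly more general claim that the image of \emph{any} irreducible affine variety under a morphism into $\K^n$ is constructible (the stated case being $X = \K^n$), because the inductive step produces morphisms out of proper closed subvarieties. Equivalently, one may note that the graph of $f$ is closed in $\K^n \times \K^n$ and $\text{Im}(f)$ is its projection, so the statement is the assertion that projections of affine algebraic sets are constructible.

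The central lemma I would establish is: a dominant morphism $\phi : X \to Y$ of irreducible affine varieties over $\K$ has image containing a nonempty Zariski-open subset of $Y$. Passing to coordinate rings, dominance yields an injection $\K[Y] \hookrightarrow \K[X]$; writing $K = \K(Y)$ for the fraction field and using the transcendence-degree tools of the Preliminaries, I would pick inside $\K[X]$ elements $t_1, \ldots, t_r$ that are algebraically independent over $K$ and form a transcendence basis of $\K(X)$ over $K$, so that the finitely many remaining algebra generators are algebraic over $K(t_1,\ldots,t_r)$. Clearing at once the finitely many denominators and leading coefficients appearing in their integral relations produces a single nonzero $g \in \K[Y]$ for which $\K[X]_g$ is a module-finite extension of the polynomial ring $\K[Y]_g[t_1,\ldots,t_r]$. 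Geometrically this factors $\phi$ over $D(g)$ as a finite morphism $X_g \to \mathbb{A}^r \times D(g)$ followed by the projection to $D(g)$; the projection is surjective on $\K$-points, and the finite morphism is surjective by lying-over for integral extensions, whence $D(g) \subseteq \phi(X)$.

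With the lemma available, I would finish by Noetherian induction on $\dim Y$. Applying it to $f : \K^n \to Y$ gives a nonempty open $D(g) \subseteq \text{Im}(f)$. The preimage $Z := f^{-1}(Y \setminus D(g))$ is a closed, hence affine algebraic, subset of $\K^n$; decomposing $Z = Z_1 \cup \cdots \cup Z_s$ into irreducible components, each restriction $f|_{Z_i}$ maps the irreducible variety $Z_i$ into $\K^n$ with image closure contained in $Y \setminus D(g)$, a variety of strictly smaller dimension than $Y$. The inductive hypothesis makes each $f(Z_i)$ constructible, and since $\text{Im}(f) = D(g) \cup \bigcup_{i} f(Z_i)$ is a finite union of an open set and constructible sets, it is itself constructible; the base case $\dim Y = 0$ is a finite set of points and is trivially constructible.

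The main obstacle is the key lemma, and within it the simultaneous clearing of denominators to obtain one $g \in \K[Y]$ for which $\K[X]_g$ becomes module-finite over a polynomial subring — verifying that only finitely many coefficients must be inverted and that the $t_i$ remain algebraically independent after localization is the delicate commutative-algebra step (a fiberwise Noether-normalization / generic-finiteness argument). The surjectivity of the resulting finite morphism I would take from the standard lying-over property of integral ring extensions. As a sanity check, the lemma is precisely the concrete content behind Proposition \ref{opendense}: once $\text{Im}(f)$ is known to be constructible, that proposition records that it is open-dense in its irreducible closure $Y$, exactly matching the dominance established at each stage of the induction.
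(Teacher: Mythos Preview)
The paper does not actually prove this statement: it is quoted in the Preliminaries as a known result (a special case of Chevalley's theorem) and used as a black box, with no proof supplied. So there is nothing in the paper to compare your argument against line by line.

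That said, your outline is the standard route to Chevalley's theorem and is sound. The generalization from $X=\K^n$ to arbitrary irreducible affine $X$ is exactly what the induction forces, and the key lemma you isolate---that a dominant morphism of irreducible affine varieties hits a nonempty open subset of the target---together with Noetherian induction on $\dim \overline{\text{Im}(f)}$ is the textbook proof. Your sketch of the lemma via a fiberwise Noether normalization (choose a transcendence basis, clear finitely many denominators and leading coefficients to get module-finiteness over $\K[Y]_g[t_1,\ldots,t_r]$, then invoke lying-over) is correct in spirit; the only place to be careful is exactly where you flag it, namely checking that a single $g$ suffices and that integrality, not just algebraicity, is obtained after localization. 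One small quibble: your closing remark that the lemma ``is precisely the concrete content behind Proposition~\ref{opendense}'' slightly conflates two things---Proposition~\ref{opendense} is a general fact about constructible sets (any locally closed set already contains an open subset of its closure), whereas your lemma is the geometric input that makes the induction go. In the paper's logic the two are used in tandem (Chevalley gives constructibility, then Proposition~\ref{opendense} extracts the open set in Proposition~\ref{main_suff}), but they are not the same statement.
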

A useful set of equivalent conditions that are satisfied by polynomial maps are presented in the following proposition. 
\begin{definition}
A polynomial map $f  = (f_1, \ldots, f_n): \K^n \rightarrow \K^n$ is \emph{dominant} if $\text{Im}(f)$ is dense in $\K^n$. 
\end{definition}
\begin{proposition}[\cite{elkadi}, Prop. 5.2]
\label{prop_jac}
For a polynomial map $f  = (f_1, \ldots, f_n): \K^n \rightarrow \K^n$, the following conditions are equivalent. 
\begin{enumerate}
\item $f$ is a dominant map. 
\item The function $f_1, \ldots, f_n$ are algebraically independent over $\K$. 
\item The Jacobian $J_f = \det\left (\left [\frac{\partial f_i}{\partial x_j}\right ]_{i,j} \right )$ of $f$ is not identically zero. 
\end{enumerate}
\end{proposition}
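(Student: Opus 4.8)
The plan is to establish the two equivalences $(1)\Leftrightarrow(2)$ and $(2)\Leftrightarrow(3)$ separately; chained together they yield the three-way equivalence. Throughout I would work over an algebraically closed field $\K$ of characteristic zero (e.g.\ $\K=\C$, as in the application of Theorem \ref{feasibilitytheorem}), since the Jacobian criterion behind $(2)\Leftrightarrow(3)$ genuinely relies on characteristic zero through separability. The first equivalence is pure algebraic geometry resting on the comorphism attached to $f$; the second is the classical Jacobian criterion for algebraic independence, which I would reduce to the identity relating $\text{trdeg}$ to the rank of the Jacobian matrix.

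For $(1)\Leftrightarrow(2)$, I would introduce the $\K$-algebra homomorphism $\phi:\K[y_1,\ldots,y_n]\to\K[x_1,\ldots,x_n]$ determined by $y_i\mapsto f_i$. The key computation is that the Zariski closure of $\text{Im}(f)$ equals $Z(\ker\phi)$. Writing $I(W)$ for the ideal of polynomials vanishing on a set $W\subseteq\K^n$, a polynomial $g\in\K[y_1,\ldots,y_n]$ lies in $I(\text{Im}(f))$ iff $g(f_1,\ldots,f_n)$ is identically zero (using that $\K$ is infinite, so a polynomial vanishing at every point is the zero polynomial), i.e.\ iff $g\in\ker\phi$. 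Hence $I(\text{Im}(f))=\ker\phi$ and $\overline{\text{Im}(f)}=Z(\ker\phi)$. Now $f$ is dominant iff $\overline{\text{Im}(f)}=\K^n$ iff $Z(\ker\phi)=\K^n$ iff $\ker\phi\subseteq I(\K^n)=\{0\}$, iff $\ker\phi=\{0\}$. But $\ker\phi=\{0\}$ says precisely that no nonzero polynomial relation holds among the $f_i$, which is the definition of algebraic independence over $\K$. This settles $(1)\Leftrightarrow(2)$. Chevalley's theorem and Proposition \ref{opendense} are not logically needed for the equivalence itself, but they explain why a dominant map moreover has image containing a dense open set, which is what makes condition (1) useful for the genericity arguments in the main proof.

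For $(2)\Leftrightarrow(3)$ I would argue each direction by contrapositive. To show algebraic dependence forces $J_f\equiv 0$: choose a nonzero $g\in\K[y_1,\ldots,y_n]$ of minimal total degree with $g(f_1,\ldots,f_n)=0$, and differentiate this identity with respect to each $x_j$ by the chain rule, giving $\sum_i (\partial g/\partial y_i)(f)\,(\partial f_i/\partial x_j)=0$ for every $j$. This places the row vector whose $i$th entry is $(\partial g/\partial y_i)(f)$ in the left kernel of the Jacobian matrix; minimality of $\deg g$ together with characteristic zero guarantees this vector is nonzero (not every $\partial g/\partial y_i$ can vanish identically, else $g$ is constant, and any surviving $\partial g/\partial y_i$ has strictly smaller degree, so by minimality it cannot be a further relation on the $f_i$), whence the Jacobian is singular and $J_f\equiv 0$. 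Conversely, to show algebraic independence forces $J_f\not\equiv 0$: the $n$ independent elements $f_1,\ldots,f_n$ form a transcendence basis of $\K(x_1,\ldots,x_n)/\K$, so each $x_j$ is algebraic, and in characteristic zero \emph{separably} algebraic, over $\K(f_1,\ldots,f_n)$. Differentiating the minimal polynomial of each $x_j$ (whose derivative is nonzero by separability) and assembling the resulting linear relations shows the Jacobian matrix has full rank $n$ over $\K(x_1,\ldots,x_n)$, i.e.\ $J_f\not\equiv 0$.

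The main obstacle is the second half of $(2)\Leftrightarrow(3)$ — extracting a nonvanishing Jacobian from algebraic independence — and, more fundamentally, the essential use of characteristic zero. The chain-rule direction is essentially formal, but the converse needs the separability of $\K(x_1,\ldots,x_n)$ over the subfield generated by a transcendence basis, which fails in positive characteristic (e.g.\ $f_1=x_1^p$ is transcendental yet has vanishing derivative). I would therefore state the characteristic-zero hypothesis at the outset and, for a self-contained treatment, isolate the identity $\text{rank}[\partial f_i/\partial x_j]=\text{trdeg}_\K\K(f_1,\ldots,f_n)$ as a lemma, from which $(2)\Leftrightarrow(3)$ is immediate for a square Jacobian; otherwise I would simply cite \cite{elkadi}.
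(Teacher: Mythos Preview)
The paper does not actually prove this proposition; it is quoted verbatim as Proposition~5.2 of \cite{elkadi} and used as a black box in the sufficiency argument of Theorem~\ref{feasibilitytheorem}. So there is no in-paper proof to compare your proposal against. Your argument is correct and is essentially the standard one: the equivalence $(1)\Leftrightarrow(2)$ via the comorphism $\phi:y_i\mapsto f_i$ and the identification $\overline{\text{Im}(f)}=Z(\ker\phi)$ is exactly right, and your treatment of $(2)\Leftrightarrow(3)$ via the chain rule in one direction and separability of $\K(x_1,\ldots,x_n)/\K(f_1,\ldots,f_n)$ in the other is the usual Jacobian criterion for algebraic independence. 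Your explicit flagging of the characteristic-zero hypothesis is appropriate and in fact sharper than the paper, which only assumes $\K$ algebraically closed; since the intended application is $\K=\C$ this causes no issue, but it is good that you isolated where the assumption is actually needed.
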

The above discussions give us the following proposition, which we will use as a key step in the sufficiency part of Theorem \ref{feasibilitytheorem}.
\begin{proposition}
\label{main_suff}
Let $f = (f_1, \ldots, f_n) : \K^n \rightarrow \K^n$ be a dominant polynomial map. Then $\text{Im}(f)$ contains a non-empty Zariski open set. 
\end{proposition}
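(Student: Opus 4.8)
The plan is to assemble the proposition directly from the three facts already stated: Chevalley's theorem, Proposition \ref{opendense}, and the definition of a dominant map. First I would invoke the special case of Chevalley's theorem quoted above: since $f = (f_1, \ldots, f_n): \K^n \rightarrow \K^n$ is a polynomial map, its image $\text{Im}(f)$ is a constructible subset of $\K^n$. This is the only substantive input; everything else is bookkeeping about topology.

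Next I would apply Proposition \ref{opendense} to the constructible set $\text{Im}(f)$: it contains a dense open subset of its own closure $\overline{\text{Im}(f)}$. Here I would use the hypothesis that $f$ is dominant, which by definition means $\text{Im}(f)$ is dense in $\K^n$, i.e.\ $\overline{\text{Im}(f)} = \K^n$. Substituting, $\text{Im}(f)$ contains a subset $U$ that is open and dense \emph{in} $\K^n$ itself. In particular $U$ is a Zariski open subset of $\K^n$, and it is contained in $\text{Im}(f)$, which is what we want.

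The last point to nail down is non-emptiness of $U$. Since $\K$ is algebraically closed it is infinite, so $\K^n$ is a non-empty space; and the empty set is never dense in a non-empty topological space (its closure is empty, not $\K^n$). Hence the dense open set $U$ must be non-empty, completing the argument.

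I do not expect a genuine obstacle here, since the proposition is essentially a repackaging of Chevalley plus Proposition \ref{opendense}; the only place requiring a moment's care is making the logical chain ``constructible $\Rightarrow$ contains dense open subset of closure'' interact correctly with ``dominant $\Rightarrow$ closure is everything'', and then observing that ``dense open in a non-empty space'' forces non-emptiness. If one wanted a self-contained treatment one could instead cite Proposition \ref{prop_jac} to relate dominance to the Jacobian and argue via the locus where $J_f \neq 0$, but that is unnecessary given that Proposition \ref{opendense} is available; the short route above is cleaner and is the one I would present.
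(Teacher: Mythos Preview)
Your proposal is correct and follows exactly the same route as the paper: invoke Chevalley to get constructibility of $\text{Im}(f)$, use dominance to identify its closure with $\K^n$, and then apply Proposition~\ref{opendense} to extract a dense open subset of $\K^n$ inside $\text{Im}(f)$. The only addition you make is spelling out the non-emptiness of $U$, which the paper leaves implicit.
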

\begin{proof}
By Chevalley's theorem, $\text{Im}(f)$ is contructible. Since $f$ is dominant, then the closure of $\text{Im}(f)$ is $\K^n$. By Proposition \ref{opendense}, $\text{Im}(f)$ contains a dense open subset of $\K^n$. 
\end{proof}

\subsection{Proof of Theorem \ref{feasibilitytheorem}}

This section proves a slightly more general form of Theorem \ref{feasibilitytheorem} where the $G$-cell network is permitted to have different number of users in each cell. Such networks are represented as $(G,\{K_g\},M\times N)$ networks. The new theorem statement follows.

\begin{theorem}
\label{newfeasibilitytheorem}
Consider a $(G,\{K_g\},M\times N)$ network where each user is served with one data stream. Let $\bt u_{gk}$ and $\bt v_{gk}$ denote the transmit and receive beamformer corresponding to the $(g,k)$th user where the set of beamformers $\{\bt u_{g1},\bt u_{g2},\hdots,\bt u_{gK_g} \}$ is linearly independent for every $g$. Further, let $\mathcal{I} \subseteq \{(i,gk):\, g\neq i,\, 1 \leq g,\, i\leq G,\, 1 \leq k \leq K_g \}$ be a set of BS-user pairs such that for each $(i,gk) \in \mathcal{I}$ the interference caused by the $i$th BS at the $(g,k)$th user is completely nulled, i.e.,
\begin{align}
\bt v^H_{gk} \bt H_{(i,gk)} \bt u_{ij}= 0, \qquad \forall j \in \{1,2,\hdots, K_i \}.
\end{align}
A set of transmit and receive beamformers $\{ \bt u_{gk} \}$ and $\{ \bt v_{gk}\}$ satisfying the polynomial system defined by $\mathcal{I}$ exist if and only if
\begin{align}
 M\geq & \ 1 \\
 N \geq  &\ K_g, \quad \forall  g.
\end{align}
and
\begin{align}
\label{eqmain}
|\mathcal{J}_{users}|(M-1)+\sum_{l\in \mathcal{J}_{BS}}(N-K_l )K_l  \geq \sum_{(l,gk)\in 
\mathcal{J}} K_l 
\end{align}
where $\mathcal{J}$ is any subset of $\mathcal{I}$ and $\mathcal{J}_{users}$ and $\mathcal{J}_{BS}$ are the set of user and BS indices that appear in $\mathcal{J}$.
\end{theorem}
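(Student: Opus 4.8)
The plan is to follow the algebraic-geometry template of \cite{razaviyayn,tingtingliu}: recast the partial-IA system as a polynomial map between affine spaces over $\C$, and then translate the feasibility question into a dimension-count / dominance question that can be settled using Propositions \ref{main_nec} and \ref{main_suff}. I would first set up the parametrization carefully. For each user $(g,k)$, the receive beamformer $\bt v_{gk}\in\C^M$ has one scaling degree of freedom removed (so $M-1$ free parameters), and at each BS $l$ the set $\{\bt u_{l1},\dots,\bt u_{lK_l}\}$ spans a $K_l$-dimensional subspace of $\C^N$; modulo the irrelevant choice of basis within that subspace, a BS contributes $(N-K_l)K_l$ free parameters (the dimension of the Grassmannian $\mathrm{Gr}(K_l,N)$). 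On the constraint side, for each pair $(l,gk)\in\I$ the condition $\bt v_{gk}^H\bt H_{(l,gk)}\bt u_{lj}=0$ for all $j$ is $K_l$ scalar equations, so $\mathcal{J}\subseteq\I$ imposes $\sum_{(l,gk)\in\mathcal{J}}K_l$ equations in $|\mathcal{J}_{users}|(M-1)+\sum_{l\in\mathcal{J}_{BS}}(N-K_l)K_l$ variables. This makes (\ref{eqmain}) exactly the statement ``variables $\geq$ equations'' for every sub-system — the classical properness condition.

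For the \textbf{necessity} direction, I would argue the contrapositive: suppose some $\mathcal{J}\subseteq\I$ violates (\ref{eqmain}). Consider the sub-network involving only the users in $\mathcal{J}_{users}$ and the BSs in $\mathcal{J}_{BS}$, and think of the entries of the desired-link channels $\{\bt H_{(g,gk)}\}$ (equivalently, the signal-space directions the beamformers must avoid, or simply the free coordinates of the beamformers themselves) as the transcendence basis of the field $\C$ over the rational subfield $\F=\Q$ generated by the interfering channel entries appearing in $\mathcal{J}$. Expressing each of the $\sum_{(l,gk)\in\mathcal{J}}K_l$ alignment polynomials as elements of $\F(x_1,\dots,x_n)$ where $n=|\mathcal{J}_{users}|(M-1)+\sum_{l\in\mathcal{J}_{BS}}(N-K_l)K_l$, Proposition \ref{main_nec} shows that if the number of polynomials exceeds $n$ they are algebraically dependent, which forces a nontrivial polynomial relation among generic channel coefficients — impossible, since the channel coefficients are algebraically independent (drawn from a continuous distribution). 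Hence no solution exists for generic channels. The inequalities $M\geq 1$ and $N\geq K_g$ are immediate: a user needs at least one antenna to receive its stream, and a BS needs $\geq K_g$ antennas to host $K_g$ linearly independent transmit beamformers.

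For the \textbf{sufficiency} direction I would build, when (\ref{eqmain}) holds for all $\mathcal{J}$, a polynomial map $f:\C^n\to\C^N$ (after padding the under-determined equations with extra auxiliary variables so that domain and codomain match, or alternatively working with a square subsystem cut out by a generic linear slice) whose fibers are exactly the IA solutions, and then show $f$ is \emph{dominant} — equivalently, by Proposition \ref{prop_jac}, that its Jacobian is not identically zero. By Proposition \ref{main_suff}, dominance implies $\mathrm{Im}(f)$ contains a nonempty Zariski-open set, so for generic channels the system is solvable, and the linear-independence constraint $\mathrm{rank}([\bt u_{g1},\dots,\bt u_{gK_g}])=K_g$ (or the $\bt v$-version) is itself a Zariski-open condition, preserved after intersecting. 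I expect \textbf{the main obstacle to be precisely the Jacobian non-vanishing step}: one cannot simply invoke the dimension count, because a properly-constrained polynomial system can still be infeasible if the constraints are not ``independent'' in the differential sense. Following \cite{razaviyayn,tingtingliu}, I would exhibit one explicit point — a carefully chosen assignment of beamformers and channel matrices (typically with many structured zeros so the Jacobian becomes block-triangular) — at which the Jacobian minor is manifestly nonzero; the combinatorial heart is checking that the hypothesis (\ref{eqmain}) on \emph{all} subsets $\mathcal{J}$ is exactly what is needed to make a Hall-type matching argument go through, guaranteeing such a point exists. The reduction to the generalized $(G,\{K_g\},M\times N)$ statement is a convenience that makes the bookkeeping on the BS side uniform; the special case $K_g=K$ for all $g$ then follows immediately, and the intra-cell-interference remark is handled afterward by a linear change of basis among the $\{\bt u_{g1},\dots,\bt u_{gK}\}$, which does not affect any of the nulling conditions.
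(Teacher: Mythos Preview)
Your overall architecture matches the paper's: put the beamformers in canonical form so that the free variables are exactly $(M-1)$ per user and $(N-K_l)K_l$ per BS, invoke Proposition~\ref{main_nec} for necessity, and for sufficiency show the Jacobian of the polynomial map is not identically zero via a Hall-matching construction and then apply Propositions~\ref{prop_jac} and~\ref{main_suff}. The sufficiency sketch is accurate --- the paper sets $\hhgkld=\bt 0$ to make the Jacobian block-structured, observes that condition~(\ref{eqmain}) over all $\mathcal{J}$ is precisely Hall's condition on the associated bipartite graph, extracts a permutation pattern, and verifies full rank from there.

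The necessity argument, however, is mis-specified in two respects. First, the field setup: the paper works over $\mathbb{C}$, not $\Q$, and the transcendence basis consists of the \emph{beamformer} free entries $\{\hvgk,\hul\}$ --- not channel entries, and certainly not the direct-link channels $\bt H_{(g,gk)}$, which do not even appear in the IA equations. Taking $\F=\Q$ adjoined with interfering channel entries would ruin the transcendence-degree count, since those entries are themselves transcendental. Second, the passage from ``the constraint polynomials are algebraically dependent'' to ``generic channels admit no solution'' is not the one-liner you suggest. The paper first rewrites each IA condition as $\hhgkla=\fgkl(\hvgk,\hul)$, isolating the $1\times K_i$ block $\hhgkla$ as the \emph{target} of the polynomial map while the remaining channel blocks $\hhgklb,\hhgklc,\hhgkld$ sit inside the coefficients of $\fgkl$. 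Algebraic dependence of the $\fgkl$'s over $\mathbb{C}$ yields a nonzero $p$ with $p(\{\fgkl\})\equiv 0$ identically in the beamformer variables; a Taylor-type expansion of $p$ about the image point then shows that if a solution exists one must have $p(\{\hhgkla\})=0$, contradicting the genericity of the (independent) block $\hhgkla$. Without this separation of the channel into a ``target'' piece and a ``coefficient'' piece, the contradiction does not close.
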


\begin{proof}
%notions
The proof closely follows the proof presented in \cite{razaviyayn} to establish a similar feasibility result. 

Let the beamformers used by BS $i$ be collectively represented as the matrix $\bt U_i$, i.e., $\bt U_i=[\bt u_{i1},\bt u_{12},\hdots, \bt u_{iK}]$ . Let $\bt v_{gk}$ and $\bt U_i$ be such that $(i,gk) \in \J$. The IA condition implies that $\bt U_i$ must have rank $K_i$. Thus, we can apply invertible linear transformations to $\bt v_{gk}$ and $\bt U_i$ such that 
\begin{align*}
\bt v_{gk} = \bt P_{gk}^v \begin{bmatrix}
       1      \\
       \hvgk    
     \end{bmatrix} \bt R_{gk}^v  \quad  \bt U_{i} = \bt  V_{i}^u \begin{bmatrix}
       \bt I_{K_i \times K_i}      \\
       \hul   
     \end{bmatrix} \bt R_{i}^u, 
\end{align*}
where $\bt P_{gk}^v$ and $\bt P_{i}^u$ are square permutation matrices (consequently, their transpose equals their inverse) while $\bt R_{gk}^v$ and $\bt R_{i}^u$ are two invertible matrices. Defining $\hhgkl = {\bt P_{gk}^v}^{-1}  \bt H_{(i,gk)} {\bt P_{i}^u}^{-1}$, we partition it in the following way. 
\begin{align*}
\hhgkl = \begin{bmatrix}
       \hhgkla & \hhgklb      \\
       \hhgklc  & \hhgkld 
     \end{bmatrix},
\end{align*} 
where $\hhgkla$ has size $1 \times K_i$. Note that $\hhgkl$ is still a generic matrix. 
With the above transformation, we can rewrite the IA condition as 
\begin{align*}
\begin{bmatrix}
       1 & \hvgk^H      \\
     \end{bmatrix}
		\begin{bmatrix}
        \hhgkla & \hhgklb      \\
       \hhgklc  & \hhgkld 
     \end{bmatrix}
		\begin{bmatrix}
       \bt I    \\
      \hul
     \end{bmatrix} = \bt 0.
\end{align*}
This can be expanded as the following equation. 
\begin{align}
\label{heq}
\hhgkla + \hvgk^H\hhgklc + \hhgklb\hul + \hvgk^H\hhgkld\hul = \bt 0.
\end{align}

%necessity 
To establish the necessity part of the theorem, first note that the total number of scalar equations in (\ref{heq}) is 
\begin{align*}
\sum_{(i,gk)\in\J} K_i,
\end{align*}
and the total number of scalar variables (unknown entries in $\{\hvgk\}$'s and $\{\hul\}$'s) is 
\begin{align*}
|\mathcal{J}_{users}|(M-1)+\sum_{i\in \mathcal{J}_{BS}}(N-K_i )K_i. 
\end{align*}
Thus if 
\begin{align}
\label{revineq}
|\mathcal{J}_{users}|(M-1)+\sum_{i\in \mathcal{J}_{BS}}(N-K_i )K_i < \sum_{(i,gk)\in\J} K_i,
\end{align}
then we would have more equations than unknowns in $(\ref{heq})$. We show that no solution (for $\{\hvgk\}$'s and $\{\hul\}$'s) can exist in this case.

Consider a transcendental field extension $\F$ of $\mathbb{C}$ with a transcendence basis given by entries of $\{\hvgk, \hul\}_{(i,gk)\in \J}$. The transcendence degree of $\F$ is $|\mathcal{J}_{users}|(M-1)+\sum_{i\in \mathcal{J}_{BS}}(N-K_i )K_i$. 
Construct, for each $(i,gk) \in \J$, 
\begin{align}
\label{feq}
 \fgkl (\hvgk, \hul) = -\hvgk^H\hhgklc - \hhgklb\hul - \hvgk^H\hhgkld\hul. 
\end{align}
Note that $\fgkl$ is a $1 \times K_i$ vector with each entry in $\F$. In particular, each entry of $\fgkl$ is a quadratic polynomial of the entries in $\hvgk$ and $\hul$. If $(\ref{revineq})$ holds, then the total number of these entries (quadratic polynomials) in $\{\fgkl\}_{(i,gk)\in \J}$ is strictly greater than the transcendence degree of $\F$ over $\mathbb{C}$. Thus, by Proposition \ref{main_nec}, these entries are algebraically dependent over $\mathbb{C}$. In particular, there must exists a nonzero polynomial $p$ (in $\sum_{(i,gk)\in\J} K_i$ variables with coefficients in $\mathbb{C}$) such that 
\begin{align*}
p(\{\fgkl(\hvgk,\hul)\}_{(i,gk)\in \J}) = 0 \quad \forall \{\hvgk, \hul\}_{(i,gk)\in \J},  
\end{align*}
where the notation $p(\{\fgkl(\hvgk,\hul)\}_{(i,gk)\in \J})$ means that $p$ takes on each entry of every $\fgkl$ as an input in a specified order. Note that $p$ is independent of $\{\hhgkla\}_{(i,gk)\in \J}$. Thus if we view $p$ as a polynomial in the variable $X = (\{\hhgkla\}_{(i,gk)\in \J})$, then $p$ can be expanded locally at $\hat{X}= (\{\fgkl(\hvgk,\hul)\}_{(i,gk)\in \J})$ as
\begin{align}
&p(\{\hhgkla\}_{(i,gk)\in \J}) \nonumber \\
&\hspace{-0.1cm}=p(\{\fgkl(\hvgk,\hul)\}_{(i,gk)\in \J})\nonumber \\
&\hspace{0.2cm}+\hspace{-0.25cm}\sum_{(i,gk)\in \J}\hspace{-0.3cm}(\hhgkla - \fgkl(\hvgk,\hul))\bt{Q}_{i,gk}(\{\hhgkla\}_{(i,gk)\in \J}) \nonumber  \\
&\hspace{5cm}\forall \{\hvgk, \hul\}_{(i,gk)\in \J},
\end{align}
where $\bt{Q}_{i,gk}$ is some polynomial vector of size $K_i \times 1$. Our assumption on $p$ implies
\begin{align}
&p(\{\hhgkla\}_{(i,gk)\in \J})= \nonumber \\
&\sum_{(i,gk)\in \J}\hspace{-0.3cm}(\hhgkla - \fgkl(\hvgk,\hul))\bt{Q}_{i,gk}(\{\hhgkla\}_{(i,gk)\in \J})\nonumber \\
&\hspace{5cm}\forall \{\hvgk, \hul\}_{(i,gk)\in \J}, 
\end{align}
If equation $(\ref{heq})$ is satisfied, then there exists a choice of matrices $\{\hvgk, \hul\}_{(i,gk)\in \J}$ such that 
\begin{align}
\hhgkla - \fgkl(\hvgk,\hul) = 0 \quad \forall {(i,gk)\in \J}. 
\end{align}
For this choice, we have
\begin{align}
\label{contra}
p(\{\hhgkla\}_{(i,gk)\in \J}) = 0. 
\end{align}
However, $\{\hhgkla\}_{(i,gk)\in \J}$ is generic and independent of $p$. Thus ($\ref{contra}$) can only be satisfied if $p$ is identically the zero polynomial. This contradicts our assumption on $p$ and proves the necessity part of Theorem \ref{newfeasibilitytheorem}.

For sufficiency, we focus on the case when the total number of variables equals the total number of equations. All other cases follow easily. To establish the sufficiency part of the theorem, note that it suffices to find a choice of $\{\hhgklb,\hhgklc,\hhgkld\}_{(i,gk)\in \I}$ such that the Jacobian of the polynomial map ($\ref{feq}$) (in variables $\{ (\hvgk,\hul)\}_{(i,gk)\in \I}$) is nonzero. The condition that the Jacobian of a polynomial map is zero is an algebraic condition on $\{\hhgklb,\hhgklc,\hhgkld\}_{(i,gk)\in \I}$. Thus, if there exists a choice of $\{\hhgklb,\hhgklc,\hhgkld\}_{(i,gk)\in \I}$ such that the Jacobian is nonzero, then the Jacobian is nonzero for \emph{generic} choices of $\{\hhgklb,\hhgklc,\hhgkld\}_{(i,gk)\in \I}$. After establishing the Jacobian of the polynomial map is nonzero, Proposition \ref{prop_jac}, tells us that the map ($\ref{feq}$) is in fact dominant. Then Proposition \ref{main_suff} tells us that the image of the map ($\ref{feq}$) contains a non-empty Zariski open set $U$ of $\F$. Thus, equation (\ref{heq}) holds for all $(\hhgkla)_{(i,gk)\in \I} \in U$ and therefore holds generically. 

We now establish a choice of $\{\hhgklb,\hhgklc,\hhgkld\}_{(i,gk)\in \I}$ such that the Jacobian of the polynomial map ($\ref{feq}$) is nonzero. The construction of the Jacobian closely follows the construction presented in \cite{tingtingliu}.

Before constructing the Jacobian matrix, we create a single concatenated vector of variables by ordering the variables $\{\hul\}$ in a lexicographic manner followed by the variables $\{\hvgk\}$ also listed in a similar manner. A list of equations is created by first listing all equations (as given in (\ref{feq})) that involve interference cancellation from the first BS, followed by the second BS, and so on. Let this vectorized list of variables and equations be denoted as  $\bs  \lambda$ and $\bs \psi$ respectively. Note that both vectors are of length $\sum_{i=1}^G \left ( (N-K_i)K_i + (M-1)K_i \right )$. The part of $\bs \lambda$ that corresponds to the $\{ \hul \}$ variables is denoted as $\bs \lambda_{\bar{\bt u}}$. Similarly define $\bs \lambda_{\bar{\bt v}}$. The part of $\bs \psi$ that corresponds to equations involving $\hul$ is denoted as $\bs \psi_{\hul}$. Further, let the number of equations that involve the $i$th BS's beamformers be given by $e_i$.

The $(i,j)$th entry in the Jacobian matrix $\bt J$ is given by $\tfrac{\partial \bs \psi_i } {\partial \bs \lambda_j}$. The notations $\tfrac{\partial \bs \psi_{\hul}}{\partial \hul}$, $\tfrac{\partial \bs \psi_{\huij}}{\partial \huij}$, $\tfrac{\partial \bs \psi_{\huij}}{\partial \lambda_{\bar{\bt v}}}$ all refer to submatrices of $\bt J$ are straightforward to infer.

%\begin{figure}[t]
%\begin{center}
%\hspace{-0.5cm}
%\includegraphics[width=2.4in]{jacobian}
%	%\caption{Structure of the Jacobian matrix. Shaded regions depict non-zero partial derivatives. }
	%\label{jacobianfig}
      %\end{center}
    %\end{figure}
In the Jacobian matrix we construct, we set $\{ \hhgkld \}$ to zero for all $g$, $k$, and $i$. We are left with choosing values for the $\{ \hhgklb \}$ and $\{ \hhgklc \}$ matrices. The structure of the resulting Jacobian matrix is illustrated using the following example.

\setlength{\arraycolsep}{0.02cm}
\setcounter{MaxMatrixCols}{20}
\begin{figure*}[t]
 \begin{align}\small
 \label{Jmatrix} 
  \begin{bmatrix}
   \makebox[0pt][l]{\hspace{0.05cm}$\smash{\overbrace{\phantom{H^j}\hspace{1.5cm}}^{\frac{\partial \bs \psi}{\partial \bar{\bt U}_1}}}$}
   \bar{\bt H}^{(2)}_{(1,21)} & 0 & 0 & 0 & \makebox[0pt][l]{\hspace{-0.23cm}$\smash{\overbrace{\phantom{H^j}\hspace{0.0cm}}^{\frac{\partial \bs \psi}{\partial \bar{\bt u}_{31}}}}$} 0 & 0 & 0 & 0 & \bar{\bt H}^{(3)}_{(1,21)11} & 0 & 0 & 0  \\
   \bar{\bt H}^{(2)}_{(1,32)} & 0 & 0 & 0 & 0 & 0 & 0 & 0 & 0 & 0 & 0 & \bar{\bt H}^{(3)}_{(1,32)11}  \\
   0 & \bar{\bt H}^{(2)}_{(1,21)}  & 0 & 0 & 0 & 0 & 0 & 0 & \bar{\bt H}^{(3)}_{(1,21)12} & 0 & 0 & 0  \\
   0 & \bar{\bt H}^{(2)}_{(1,32)}  & 0 & 0 & 0 & 0 & 0 & 0 & 0 & 0 & 0 & \bar{\bt H}^{(3)}_{(1,32)12}  \\
   0 & 0 & \bar{\bt H}^{(2)}_{(2,11)} & 0 & 0 & 0 & \bar{\bt H}^{(3)}_{(2,11)11} & 0 & 0 & 0 & 0 & 0   \\
   0 & 0 & \bar{\bt H}^{(2)}_{(2,31)} & 0 & 0 & 0 & 0 & 0 & 0 & 0 & \bar{\bt H}^{(3)}_{(2,31)11} & 0   \\
   0 & 0 & 0 & \bar{\bt H}^{(2)}_{(2,11)}  & 0 & 0 & \bar{\bt H}^{(3)}_{(2,11)12} & 0 & 0 & 0 & 0 & 0  \\
   0 & 0 & 0 & \bar{\bt H}^{(2)}_{(2,31)}   & 0 & 0 & 0 & 0 & 0 & 0 & \bar{\bt H}^{(3)}_{(2,31)12} & 0  \\
   0 & 0 & 0 & 0 & \bar{\bt H}^{(2)}_{(2,11)}  & 0 & 0 & \bar{\bt H}^{(3)}_{(3,12)11} & 0 & 0 & 0 & 0  \\
   0 & 0 & 0 & 0 & \bar{\bt H}^{(2)}_{(2,31)}  & 0 & 0 & 0 & 0 & \bar{\bt H}^{(3)}_{(3,22)11} & 0 & 0  \\
   0 & 0 & 0 & 0 & 0 & \bar{\bt H}^{(2)}_{(2,11)}   & 0 & \bar{\bt H}^{(3)}_{(3,12)12} & 0 & 0 & 0 & 0  \\
   \makebox[0pt][l]{\hspace{-0.2cm}$\smash{\underbrace{\phantom{H_j}\hspace{5.6cm}}_{\frac{\partial \bs \psi}{\partial {\bs \lambda}_{\bar{\bt u}}}}}$} 0 & 0 & 0 & 0 & 0 & \bar{\bt H}^{(2)}_{(2,31)}   & \makebox[4pt][l]{\hspace{-0.2cm}$\smash{\underbrace{\phantom{H_j}\hspace{6.9cm}}_{\frac{\partial \bs \psi}{\partial {\bs \lambda}_{\bar{\bt v}}}}}$} 0 & 0 & 0 & \bar{\bt H}^{(3)}_{(3,22)12} & 0 & 0  \\
   \end{bmatrix}
   \begin{matrix}
     \bigg \} {\frac{\partial \bs \psi_{\bar{\bt u}_{11}}}{\partial  \bs  \lambda}}  \\ \phantom{\Big \}} \\  \phantom{\Bigg \}}  \\   \phantom{\Big \}} \\ \phantom{\Big \}}\\ \phantom{} \\ \left \}\rule{0cm}{1.03cm}  {\frac{\partial \bs \psi_{\bar{\bt U}_{3}}}{\partial  \bs  \lambda}} \right. \\
    \end{matrix}
  \end{align}
  \end{figure*}

\begin{example}	
\label{Jeg}
 Consider the $(3,2,2\times3)$ network with the set $\mathcal I$ given by  $\{ (1,21),\allowbreak (1,32),\allowbreak (2,11),\allowbreak (2,31),\allowbreak (3,12),\allowbreak (3,22) \}$. Then $\bs \lambda=[\bar{\bs U}_{111},\allowbreak \bar{\bs U}_{112},\allowbreak \bar{\bs U}_{211}, \bar{\bs U}_{212},\allowbreak \bar{\bs U}_{311}, \bar{\bs U}_{312},\allowbreak \bar{\bs v}_{1111}, \bar{\bs v}_{1211},\allowbreak \bar{\bs v}_{2111},\allowbreak \bar{\bs v}_{2211},\allowbreak \bar{\bs v}_{3111},\allowbreak \bar{\bs v}_{3211}  ]$ and $\bs \psi$ is given by $[ \bt F_{(1,21)1},\bt F_{(1,32)1},\allowbreak \bt F_{(1,21)2}, \allowbreak \bt F_{(1,32)2},\allowbreak \bt F_{(2,11)1}, \allowbreak \bt F_{(2,31)1}, \allowbreak \bt F_{(2,11)2},\allowbreak \bt F_{(2,31)2}, \allowbreak \bt F_{(3,12)1}, \allowbreak \bt F_{(3,22)1},\allowbreak \bt F_{(3,12)2},\allowbreak \bt F_{(3,22)2}]$, where $\bar{\bt U}_{ipq}$ refers to the $(p,q)$th element of $\bar{\bt U}_i$, $\bar{\bt v}_{ijpq}$ refers to the $(p,q)$th element of $\bar{\bt v}_{ij}$ and ${\bt F}_{(i,gk)p}$ refers to the $p$th equation of ${\bt F}_{(i,gk)}$. The $12\times 12$ Jacobian matrix and the various submatrices are presented in (\ref{Jmatrix}). In this example, each $\bar{\bt H}_{(i,gk)}^{(2)}$ matrix is a $1\times 1$ matrix, while each $\bar{\bt H}_{(i,gk)}^{(3)}$ matrix is a $1\times 2$ matrix. Note the block diagonal structure on the left with each block repeated twice.
\end{example}

\begin{figure*} 
\centering
 \subfloat[]{
                \includegraphics[width=1.2in]{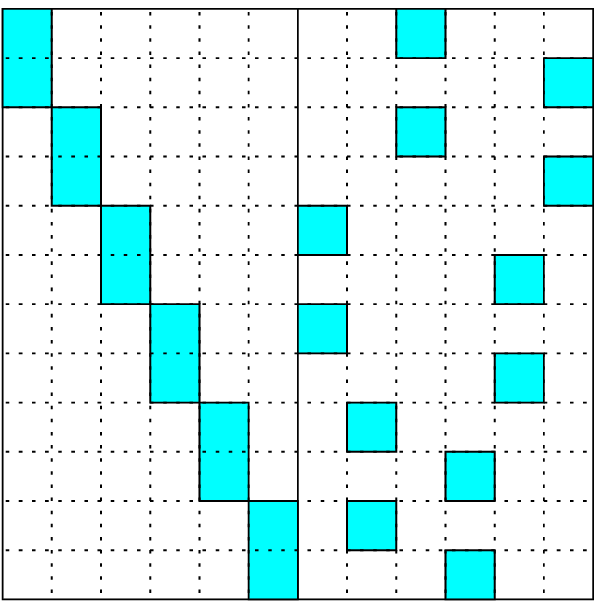}
		%\vspace{0.5cm}
	%	\caption{}
		\label{structurefiga}
        }\hfil
	\subfloat[]{
	        \includegraphics[width=1.2in]{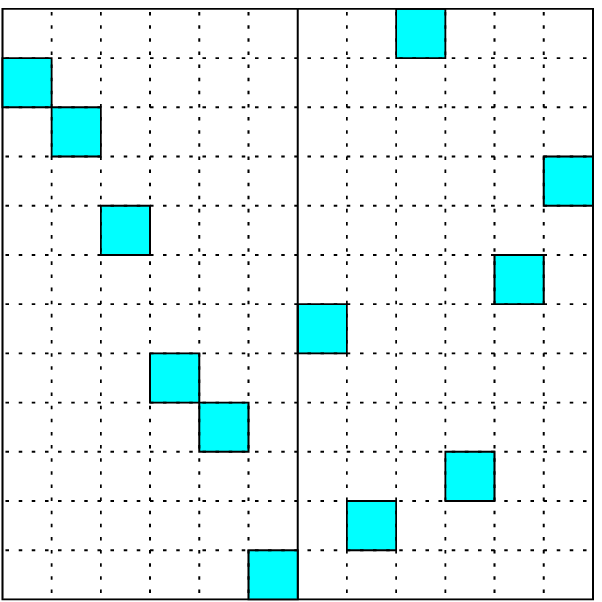}
		%\vspace{0.5cm}
	%	\caption{}
		\label{structurefigb}
        }\hfil
        \subfloat[]{
                \includegraphics[width=1.2in]{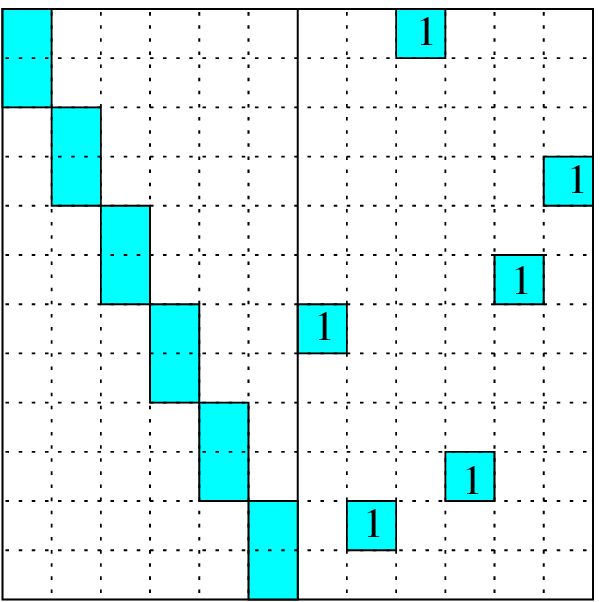}
		%\vspace{0.5cm}
	%	\caption{}
		\label{structurefigc}
        }%
	\caption{Structure of (a) the original Jacobian matrix; (b) after reduction to a permutation matrix; and (c) final structure of the Jacobian matrix after preserving the permutation structure only on the right half, for the $(3,2,2\times 3)$ network in Example \ref{Jeg}.}\label{structurefig}
\end{figure*}

Setting $\{ \hhgkld \}$ to zero results in a Jacobian matrix that has a repeating structure on the left (corresponding to the submatrices $\tfrac{\partial \bs \psi_{\huij}}{\partial \huij}$ that have the same partial derivatives for a fixed $i$ as $j$ is varied and a sparse structure on the right. Note that no channel elements get repeated in the $\tfrac{\partial \bs \psi}{\partial \bs \lambda_{\bar{\bt v}}}$ submatrix and that two different sets of channel elements are involved in the $\tfrac{\partial \bs \psi}{\partial \bs \lambda_{\bar{\bt u}}}$ and $\tfrac{\partial \bs \psi}{\partial \bs \lambda_{\bar{\bt v}}}$ submatrices. Further note that each row of such a Jacobian matrix has as many non-zero entries as the number of variables involved in the equation corresponding to that row (see (\ref{feq})).

The structure of such a matrix can be represented as a bipartite graph where two vertices of such a bipartite graph are connected by an edge if the corresponding element in the matrix is non-zero. When the necessary condition (\ref{eqmain}) holds, it can be shown that the bipartite graph constructed through the Jacobian matrix in the above manner satisfies the necessary conditions for Hall's theorem \cite{diestel} which guarantees the existence of a perfect matching in such a matrix. Existence of perfect matching in such a  graph is equivalent to the ability to reduce the Jacobian matrix to a permutation matrix by setting certain channel values to zero (while ignoring the repetitions of certain channel values and treating all entries in the Jacobian matrix to be independent of each other). Fig.~\ref{structurefiga} represents the structure of the Jacobian matrix for the example discussed earlier with gray cells representing non-zero entries and Fig.~\ref{structurefigb} is one possible permutation matrix that such a matrix can be reduced to. 

We retain the permutation structure that results from such a reduction only in the $\tfrac{\partial \bs \psi}{\partial \lambda_{\bar{\bt v}}}$ submatrix of $\bt J$ (right half of Fig.~\ref{structurefigb}) by setting all non-zero channel values to 1. The structure of the resulting Jacobian matrix for the example discussed earlier is given in Fig.~\ref{structurefigc}. With the $\tfrac{\partial \bs \psi}{\partial \lambda_{\bar{\bt v}}}$ fixed in the above manner, it can be shown that for any random full-rank choice of the  submatrices $\tfrac{\partial \bs \psi_{\bar{\bt u}_{ij}}}{\partial {\bar{\bt u}_{ij}}}$ (while ensuring $\tfrac{\partial \bs \psi_{\bar{\bt u}_{ij}}}{\partial {\bar{\bt u}_{ij}}}=\tfrac{\partial \bs \psi_{\bar{\bt u}_{ik}}}{\partial {\bar{\bt u}_{ik}}}$ for $j\neq k )$, the resulting Jacobian is full-rank. To see this, first note that the rank of the Jacobian is now a sum of the ranks of the individual submatrices $\tfrac{\partial \bs \psi_{\bar{\bt u}_{ij}}}{\partial \bs \lambda}$. Note that each such submatrix has non-zero entries in mutually exclusive columns. Now, each submatrix $\tfrac{\partial \bs \psi_{\bar{\bt u}_{ij}}}{\partial \bs \lambda}$ has $e_i$ rows. By construction, the right side of this submatrix (this is the $\tfrac{\partial \bs \psi_{\bar{\bt u}_{ij}}}{\partial \lambda_{\bar{\bt v}}}$ submatrix) has $(e_i-N+K_i)$ ones on distinct rows. This is because, the submatrix $\tfrac{\partial \bs \psi_{\bar{\bt u}_{ij}}}{\partial {\bar{\bt u}_{ij}}}$ has $N-K_i$ columns and each column must have at least one entry chosen while constructing the permutation matrix, leaving $(e_i-N+K_i)$ non-zero entries in the $\tfrac{\partial \bs \psi_{\bar{\bt u}_{ij}}}{\partial \lambda_{\bar{\bt v}}}$ submatrix after adopting the permutation structure.

Using a column transformation and eliminating non-zero entries on $(e_i-N+K_i)$ rows on the left side (i.e., the $\tfrac{\partial \bs \psi_{\bar{\bt u}_{ij}}}{\partial \lambda_{\bar{\bt u}}}$ submatrix), we are left with exactly $(N-K_i)$ non-zero rows in $\tfrac{\partial \bs \psi_{\bar{\bt u}_{ij}}}{\partial \lambda_{\bar{\bt u}}}$ having non-zero entries in the same number of columns. It is easy to see that for any random full-rank choice of the channel matrices $\{ \hhgklb \}$, such a submatrix is full rank, thus proving that each of the submatrices is full-rank and hence the Jacobian has a non-zero determinant. Such a construction completes the proof of sufficiency.
\end{proof}

\bibliographystyle{IEEEtran}
\bibliography{IEEEabrv,ref_file}

\begin{IEEEbiographynophoto}{Gokul Sridharan} (S'08-M'15)
received his B.Tech. and M.Tech. (dual degree) in Electrical Engineering from the Indian Institute of Technology Madras, Chennai, India, in 2008, and the M.A.Sc. and Ph.D. degrees in Electrical Engineering from the University of Toronto, Toronto, Canada, in 2010 and 2014, respectively. He is currently a post-doctoral researcher at WINLAB, Rutgers, The State University of New Jersey, New Brunswick, New Jersey, USA. His research interests include wireless communications, convex optimization and information theory.
\end{IEEEbiographynophoto}

\begin{IEEEbiographynophoto}{Siyu Liu} (S'07-M'16)
Siyu Liu received the B.A.Sc degree in Electrical Engineering and the B.Sc degree in Mathematics and Physics concurrently, the M.A.Sc degree in Electrical Engineering, the M.Sc degree in Mathematics, and Ph.D. in Electrical Engineering all from the University of Toronto, ON, Canada, in 2007, 2009, 2010, and 2016 respectively. His research interests include coding theory, information theory, applied algebra and algebraic geometry.
\end{IEEEbiographynophoto}

\begin{IEEEbiographynophoto}{Wei Yu}(S'97-M'02-SM'08-F'14)
received the B.A.Sc. degree in Computer Engineering and Mathematics from the University of Waterloo, Waterloo, Ontario, Canada in 1997 and M.S. and Ph.D. degrees in Electrical Engineering from Stanford University, Stanford, CA, in 1998 and 2002, respectively. Since 2002, he has been with the Electrical and Computer Engineering Department at the University of Toronto, Toronto, Ontario, Canada, where he is now Professor and holds a Canada Research Chair (Tier 1) in Information Theory and Wireless Communications. His main research interests include information theory, optimization, wireless communications and broadband access networks.

Prof. Wei Yu currently serves on the IEEE Information Theory Society Board of Governors (2015-17). He is an IEEE Communications Society Distinguished Lecturer (2015-16). He served as an Associate Editor for IEEE Transactions on Information Theory (2010-2013), as an Editor for IEEE Transactions on Communications (2009-2011), as an Editor for IEEE Transactions on Wireless Communications (2004-2007), and as a Guest Editor for a number of special issues for the IEEE Journal on Selected Areas in Communications and the EURASIP Journal on Applied Signal Processing. He was a Technical Program co-chair of the IEEE Communication Theory Workshop in 2014, and a Technical Program Committee co-chair of the Communication Theory Symposium at the IEEE International Conference on Communications (ICC) in 2012. He was a member of the Signal Processing for Communications and Networking Technical Committee of the IEEE Signal Processing Society (2008-2013). Prof. Wei Yu received a Steacie Memorial Fellowship in 2015, an IEEE Communications Society Best Tutorial Paper Award in 2015, an IEEE ICC Best Paper Award in 2013, an IEEE Signal Processing Society Best Paper Award in 2008, the McCharles Prize for Early Career Research Distinction in 2008, the Early Career Teaching Award from the Faculty of Applied Science and Engineering, University of Toronto in 2007, and an Early Researcher Award from Ontario in 2006. Prof. Wei Yu was named a Highly Cited Researcher by Thomson Reuters in 2014. He is a registered Professional Engineer in Ontario.
\end{IEEEbiographynophoto}

\end{document}